\DeclareRobustCommand{\msg}[1]{%
    \ifmmode
        \mathord{\mbox{\raisebox{.5pt}{\textcircled{\raisebox{-.9pt}{#1}}}}}%
    \else
        \raisebox{.5pt}{\textcircled{\raisebox{-.9pt}{#1}}}%
    \fi
}
\newcommand{\dpow}{DPoW\xspace}
\algrenewcommand{\algorithmiccomment}[1]{\hfill\ensuremath{//} #1}
\newcommand{\LeftComment}[1]{%
  \Statex \hskip\ALG@tlm // #1%
}
\algrenewcommand{\Call}[2]{\textbf{call}\ \textsc{#1}(#2)}
\definecolor{codegreen}{rgb}{0,0.6,0}
\definecolor{codegray}{rgb}{0.5,0.5,0.5}
\definecolor{codepurple}{rgb}{0.58,0,0.82}
\definecolor{backcolour}{rgb}{0.95,0.95,0.92}
\newtheorem{definition}{Definition}
\newtheorem{theorem}{Theorem}
\newtheorem*{theorem*}{Theorem}
\newtheorem{lemma}{Lemma}
\crefname{lemma}{lemma}{lemmas}
\Crefname{lemma}{Lemma}{Lemmas}
\newtheorem{property}{Property}
\crefname{property}{property}{properties}
\Crefname{property}{Property}{Properties}
\newtheorem{assumption}{Assumption}
\crefname{assumption}{assumption}{assumptions}
\Crefname{assumption}{assumption}{assumptions}
\crefname{guarantee}{guarantee}{guarantees}
\Crefname{guarantee}{guarantee}{guarantees}
\begin{document}

\title{Fast Deterministically Safe Proof-of-Work Consensus}

\author[1,*]{Ali Farahbakhsh}
\author[2,*]{Giuliano Losa}
\author[1,*]{Youer Pu}
\author[1]{Lorenzo Alvisi}
\author[3]{Ittay Eyal}
\affil[1]{Cornell University}
\affil[2]{Stellar Development Foundation}
\affil[3]{Technion}
\date{} %

\maketitle

\begin{abstract}

Permissionless blockchains achieve consensus while allowing unknown nodes to join and leave the system at any time.
They typically come in two flavors: proof of work (PoW) and proof of stake (PoS), and both are vulnerable to attacks.
PoS protocols suffer from long-range attacks, wherein attackers alter execution history at little cost, and PoW protocols are vulnerable to attackers with enough computational power to subvert execution history.
PoS protocols respond by relying on external mechanisms like social consensus; PoW protocols either fall back to probabilistic guarantees, or are slow.

We present \emph{Sieve-MMR}, the first fully-permissionless protocol with deterministic security and constant expected latency that does not rely on external mechanisms.
We obtain Sieve-MMR by porting a PoS protocol (MMR) to the PoW setting. From MMR we inherit constant expected latency and deterministic security, and proof-of-work gives us resilience against long-range attacks.
The main challenge to porting MMR to the PoW setting is what we call \emph{time-travel attacks}, where attackers use PoWs generated in the distant past to increase their perceived PoW power in the present.
We respond by proposing \emph{Sieve}, a novel algorithm that implements a new broadcast primitive we dub \emph{time-travel-resilient broadcast} (TTRB).
Sieve relies on a black-box, \emph{deterministic} PoW primitive to implement TTRB, which we use as the messaging layer for MMR.

\end{abstract}

\section{Introduction}\label{sec: introduction}

\begingroup
\renewcommand{\thefootnote}{*}
\footnotetext{These author contributed equally and are given in alphabetical order.}
\endgroup

Cryptocurrencies like Bitcoin and smart-contract platforms such as Ethereum aim to provide universal decentralized access to services like payments, banking, insurance, and e-commerce.
They aspire to present users worldwide with an open-access, secure, and \emph{fast} transaction log.
Typically, a total-order broadcast (TOB) protocol implements the log abstraction, and participation in the protocol is permissionless thanks to the use of proof-of-stake (PoS) or proof-of-work (PoW).
In PoW, participation requires solving expensive computational puzzles; in PoS, it requires putting cryptocurrency in escrow.

PoW and PoS paradigms are vulnerable to attacks that, though paradigm-specific,  share a similar goal: crafting an alternate execution history to confuse newly joining nodes who did not witness the past execution firsthand. No existing protocol, in either paradigm, has satisfactorily addressed these attacks.
{\em Long-range attacks} are an instance of these attacks specific to PoS systems: they consist in purchasing keys belonging to former participants---likely cheaply, as those parties no longer have a skin in the game---and using those keys to fabricate an alternate execution history. %
Without mechanisms external to the system (\emph{e.g.}, secure checkpoints, social consensus, etc.), there exists no defense against such attacks~\cite{tasBitcoinEnhancedProofofStakeSecurity2023,azouviPikachuSecuringPoS2022}.
PoW systems are not vulnerable to long-range attacks; however, they are either prohibitively slow~\cite{Sandglass, Gorilla} or rely on proof-of-work puzzles with probabilistic guarantees, which leave open the possibility that a lucky attacker may successfully create an alternate history ({\em e.g.}, in Bitcoin, a fork of the longest chain).
Even if in practice the failure probability can be made sufficiently small, proving safety for these probabilistic protocols is quite tricky; for instance, it took the community considerable effort to establish Bitcoin's security (\emph{e.g.},~\cite{garayBitcoinBackboneProtocol2015,demboEverythingRace2020}).

We present Sieve-MMR, the first permissionless \emph{PoW} TOB protocol with \emph{deterministic security} and \emph{constant expected latency}.
Sieve-MMR relies on a cryptographic hash function to obtain a \emph{deterministic proof-of-work primitive} \dpow.
Generating a \dpow requires a fixed number of hash computations, and we assume that adversaries control a minority of the computation power.
\dpow is similar to a verifiable delay function~\cite{bonehVerifiableDelayFunctions2018}, but without the requirement that the computation steps be performed serially.

A key principle guides the design of Sieve-MMR: \emph{decoupling the consensus logic from the logic used to control the undesirable side-effects of permissionless participation}. This separation of concerns opens an intriguing new possibility: safely deploying existing low-latency consensus protocols, originally developed under stronger models, in a fully permissionless setting.

Accordingly, we design Sieve-MMR in two layers. The top layer implements consensus using a protocol inspired by what we call the MMR protocol~\cite[Appendix A]{malkhiByzantineConsensusFully2023}, a fast and deterministically safe protocol in the family of dynamically available~\cite{lewis-pyePermissionlessConsensus2024} PoS TOB protocols.
Our main technical contributions are in the bottom layer.
For the first time, this layer specifies and implements the message delivery guarantees that dynamically available protocols like MMR must rely on to maintain correctness in a fully-permissionless setting.

\par{\bf Time travel considered harmful.} Embedding dynamically available protocols in a fully-permissionless setting exposes them to an insidious new threat.
Not only must they defend against adversaries using their current computing power to modify past consensus decisions, but also against adversaries leveraging their {\em past} computing power to alter consensus decisions in the present!

This vulnerability stems from how dynamically available protocols implement consensus~\cite{gafni_brief_2023, malkhi_towards_2023}:  they build upon traditional quorum intersection arguments, which in turn rely on correct ({\em i.e.}, protocol-abiding) nodes generating a sufficiently strong majority of the messages being sent.
However, in a permissionless setting, nothing prevents corrupted nodes from generating messages at some point in the past, holding onto them, and sending them as if they were generated in the present, thereby distorting the quorums correct nodes perceive.

\par {\bf TTRB and Sieve.} 
We capture the messaging properties that MMR and similar protocols need to be immune to such {\em time travel attacks} with a new broadcast primitive: {\em time-travel-resilient broadcast} ({\em TTRB}).
TTRB operates in rounds and provides two guarantees: ($i$) all messages TTRB-delivered in a given round were generated in the previous round; and ($ii$) all messages that correct nodes generated in the previous round are TTRB-delivered by correct nodes in the current round. 

We implement TTRB with the novel protocol {\em Sieve}.
Like its namesake, we use Sieve to filter out ``impurities''---in our case, time-traveling messages.
Sieve limits the number of messages that a node can generate in a given round by augmenting each message with the \dpow evaluation of its payload.
Furthermore, each message in Sieve is associated with a timestamp, which intuitively represents the round in which the message generation began.
Messages carry this timestamp as an attribute, and in any given round, Sieve should only deliver messages that are timestamped from the previous round.
Of course, an adversary might attempt to counterfeit the timestamp attribute, trying to pass off an earlier message as a later one.
To counter this, Sieve implements a Byzantine-tolerant mechanism that can identify and discard messages with counterfeit timestamps.

At the core of this mechanism is the DAG of \dpow evaluations induced by requiring correct nodes to logically include in each message they generate a ``coffer'' containing all messages they accepted in the previous round.
By iteratively analyzing and pruning the DAG, Sieve is able to tell when a message \(m\) claiming a generation time \(s\) includes at least one message generated by a correct node at time \(s-1\); inductively, this guarantees that \(m\) was generated no earlier than time \(s\), and Sieve filters out all messages that do not pass this test.

Concretely, Sieve comprises two filtering policies: Bootstrap-Sieve and Online-Sieve.
Correct nodes execute Bootstrap-Sieve upon joining the execution, in order to catch up; once caught up, they can switch to Online-Sieve.
Bootstrap-Sieve operates over the entire DAG of \dpow evaluations, using the full prior history to inform its analysis. 
Online-Sieve is instead much cheaper: it requires only a set of filtered messages from the previous messaging round.

\par {\bf Adversarial assumptions.} Sieve's guarantees hold when adversaries are collectively 1/2-bounded, {\em i.e.}, when over any sufficiently long stretch of time---long enough for a correct node to compute at least one \dpow---attackers compute strictly less than half of the total number of \dpow evaluations.
This requirement is similar to the common PoW majority assumption.
The MMR protocol, on the other hand, tolerates only a 1/3-bounded adversary, and Sieve-MMR inherits this stronger assumption. 

\par {\bf Determinism.}
Sieve-MMR's claims of deterministic safety are qualified: they apply within the confines of a Dolev-Yao-style model~\cite{dolev-yao} in which adversaries do not break cryptographic primitives and do not guess messages.
Unlike Sieve, the guarantees of PoW protocols like Bitcoin remain probabilistic even if adversaries do not break cryptographic assumptions, as they rely on a non-deterministic process that naturally lends itself to a stochastic analysis. As noted before, establishing rigorously the security of these protocols is notoriously difficult.

\par {\bf Practical limitations.}
While Sieve-MMR marks a significant step  towards practical PoW protocols that can deliver deterministic TOB with constant-latency, some key hurdles remain.
Notably, Sieve-MMR assumes a synchronous network and relies on all-to-all broadcast in every round, incurring bandwidth costs that scale quadratically with the number of active nodes.
Moreover, running Bootstrap-Sieve requires solving an exponential-time problem over a graph that captures the protocol's execution thus far. Although nodes are not bound to complete this computation within a fixed time, they must do so before they can actively participate in the protocol.

\par {\bf Summary of the contributions.}
In conclusion, we make the following contributions:
\begin{itemize}
    \item We introduce time-travel-resilient broadcast (TTRB), a new broadcast abstraction that guarantees a messaging layer immune to time-travel attacks.
    \item We derive Sieve, a new algorithm that correctly implements TTRB assuming that attackers control a minority of the computation power in the system.

    \item We present Sieve-MMR, the first consensus protocol for the permissionless setting that, leveraging the guarantees of the Sieve-enabled TTRB primitive, achieves constant expected latency and deterministic safety without trusting external mechanisms.
\end{itemize}

PlusCal/TLA+ specifications of the Sieve and MMR algorithms can be found online~\cite{losaFormalModelsSieveMMR2025} and in~\Cref{tla-specs}. 

\section{Background}\label{sec: background}

Sieve and Sieve-MMR draw inspiration both from classic abstractions in  fault tolerant distributed computing and from prior work on permissionless consensus.
Three essential notions are useful to contextualize Sieve and Sieve-MMR within this broad landscape: \emph{total-order broadcast}, \emph{dynamically available protocols}, and the \emph{sleepy model}.

\textbf{Total-Order Broadcast.}
Total-order broadcast (TOB)~\cite{lamport_time_1978}, also known as {\em atomic broadcast}~\cite{cristian1995atomic}, condenses into a specification the agreement and progress aspects of State Machine Replication (SMR) ~\cite{lamport_time_1978,SMR}, the most general approach for building fault-tolerant distributed systems. SMR provides to its clients the abstraction of a single state machine that never fails by replicating the machine's state and coordinating the replicas actions. 

SMR uses the abstraction of a command log; each replica has one such log, and one copy of the state. If correct and deterministic replicas, starting from the same initial state, agree on the ordering of the client-issued commands within their logs,  then executing the logs up to any fixed index produces the same state, and the same reply to each command,  at each replica. Voting can then be used to ensure that clients  only process replies that a single correct deterministic replica, given the same initial state and command sequence, would have generated. 

To support this approach, the TOB specification assumes a set of nodes ($e.g.$, replicas) that receive messages ({\em e.g.}, {\em commands} in SMR) from clients, and broadcast the messages among themselves.
Total-order broadcast requires of all correct nodes to deliver messages consistently, $i.e.$, the sequence of messages delivered by any two correct  nodes ($e.g.$, the command logs for two replicas) should satisfy the prefix relation.
The system must also make progress infinitely often, $i.e.$, all messages are eventually delivered.

\textbf{Dynamically Available Protocols.} Total-order broadcast is a core technical challenge also for blockchain and decentralized computing platforms.
While the terminology may differ, the underlying concern remains the same: nodes must maintain a consistent view of the system state, which should infinitely often progress by incorporating client transactions.

A key distinction between traditional fault-tolerant systems and modern blockchains lies in the treatment of node availability. In classical settings, nodes are assumed to be either active or faulty. In contrast, blockchains introduce the notion of {\em node churn}, allowing nodes to become inactive voluntarily---even when they are not faulty.
Churn manifests in various forms, recently formalized through degrees of {\em permissionlessness}~\cite{ByzantineGeneralsPermissionless,lewis-pyePermissionlessConsensus2024}. At one end of the spectrum, nodes lack tangible identities and may join or leave the system unilaterally; at the other lies the traditional model, where participation is more tightly controlled and nodes have unique identities.

Dynamically available protocols---corresponding to the model formalized by Lewis-Pye and Roughgarden~\cite{ByzantineGeneralsPermissionless,lewis-pyePermissionlessConsensus2024}---occupy a middle ground on this spectrum. These protocols assume that nodes have unique identities and that the pool of participants is globally known, while still allowing nodes to become inactive at will. Joining the pool, however, requires explicit approval from its current members.

Inspired by Ethereum~\cite{pos-ethereum}, many dynamically available protocols have adopted probabilistic safety guarantees. More recently, a new class of protocols has emerged that achieves deterministic safety~\cite{losa_consensus_2023, gafni_brief_2023, malkhi_towards_2023, malkhiByzantineConsensusFully2023, momose_constant_2022}. Their central insight is that the core correctness argument behind traditional total-order broadcast---namely, quorum intersection---remains applicable in the dynamically available setting. These protocols typically follow a layered design: they first solve a variant of graded agreement~\cite{commit-adopt, feldman-micali-graded-broadcast, katz-gradecast}, and then leverage it as a black box to implement total-order broadcast. For this approach to apply, however, they must rely on a strong and often restrictive model.

\textbf{The Sleepy Model.} 
The sleepy model~\cite{pass_sleepy_2017} serves as the {\em de facto} standard assumed by dynamically available protocols. In this model, nodes are part of a public key infrastructure (PKI), and all participants are known to one another. Correct nodes may freely alternate between active and inactive states, while faulty nodes {\em are  perpetually active}. Crucially, the model requires that a majority of active nodes be correct at all times.

This model is restrictive: for example, it cannot capture scenarios where the system size grows while maintaining a constant fraction of Byzantine nodes. It also places a burden on correct nodes, which must establish a strong presence from the outset. Some dynamically available protocols inherit this limitation from the sleepy model~\cite{momose_constant_2022, gafni_brief_2023}, while others relax it by adopting a stronger majority assumption~\cite{malkhi_towards_2023} than what is standard in Bitcoin~\cite{nakamotoBitcoinPeertopeerElectronic2008}. Specifically, these protocols compare the number of correct nodes at a given time with the number of faulty nodes over a time interval. This interval always begins at the start of execution, which means that the number of correct nodes at some time~$t$ must exceed the total number of Byzantine nodes over some interval~$[0, t']$ for~$t\leq t'$.While this assumption allows for some fluctuation in Byzantine participation, it remains overly strong.

\textbf{Time Travel is Harmful.} Weakening the majority assumption to exclude the entire execution history exposes dynamically available protocols to time-travel attacks. In such scenarios, Byzantine nodes can resurface messages from the distant past, distorting the quorum views of correct nodes in the present. This undermines the effectiveness of the majority assumption, rendering it essentially useless.

\Cref{fig:time-travel} illustrates a time-travel attack. Nodes  \(n_1\), \(n_2\), and \(n_3\) are correct;  nodes \(n_4\) and \(n_5\) are Byzantine. Time progresses from left to right and is divided into two protocol steps. Circles represent messages ({\em e.g.}, votes in a consensus algorithm), and arrows indicate when and where those messages are delivered. In step 0, all nodes are active; 
 in step~1, only \(n_1\), \(n_2\), and \(n_4\) remain active. Notably, in both steps correct active nodes outnumber Byzantine active nodes.
 
However, even assuming authenticated messages, the adversary controlling the Byzantine nodes can delay the delivery of \(n_5\)'s step-0 messages until step~2 (possibly by having \(n_5\) forward its messages or share its signing keys with \(n_4\)). By so doing, the adversary causes correct nodes to perceive a distorted quorum in step 2: a strict majority of correct nodes appears to no longer be present! In consensus protocols that rely on quorum intersection for safety, such distortions can lead to  violations of safety guarantees.
\begin{figure}[ht]
    \centering
    \includegraphics[width=0.5\columnwidth]{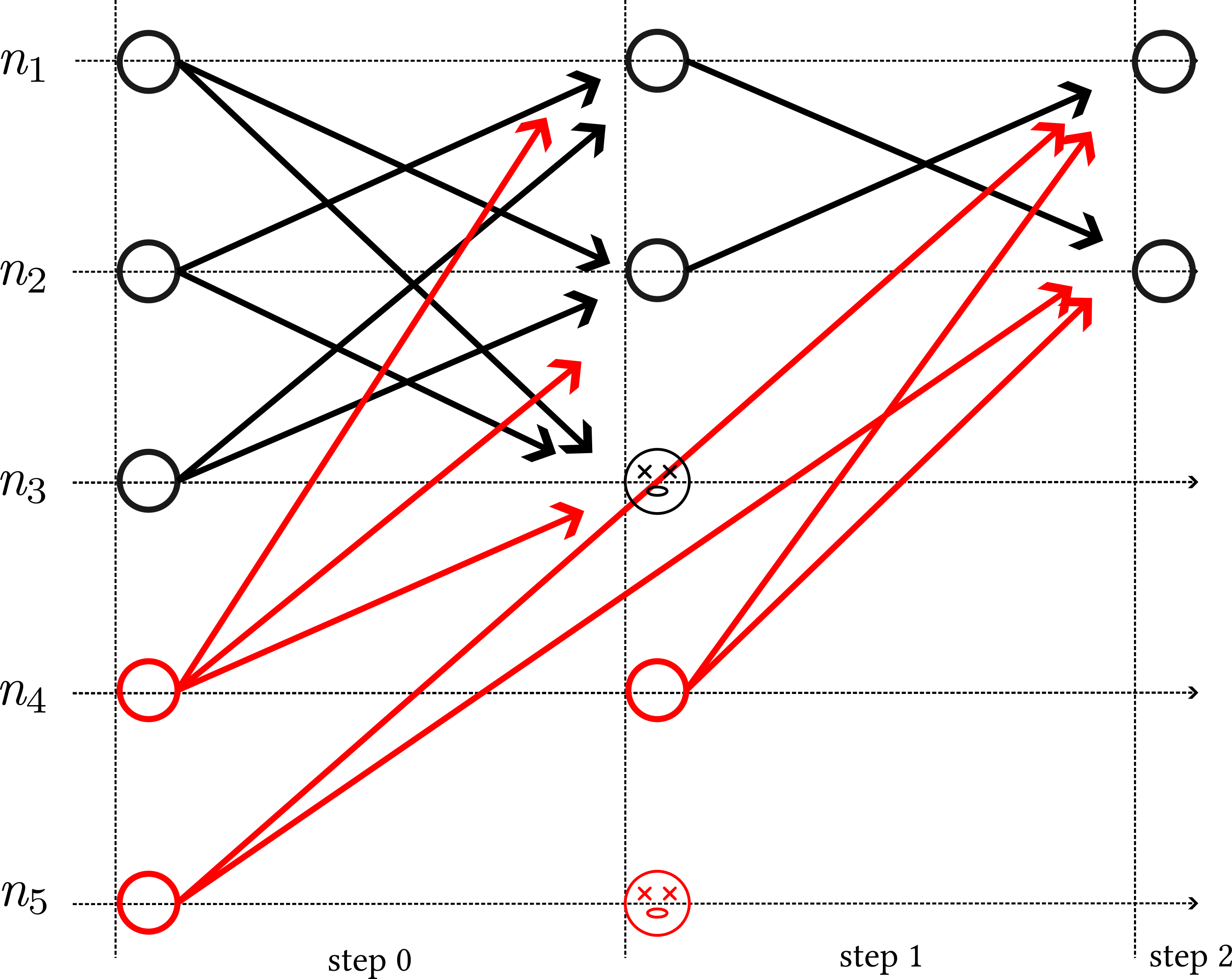}
    \caption{Example of time-travel attack. Although active correct nodes form a majority among active nodes in Step~1 ({\em i.e.}, 2 out of 3), correct nodes in Step~2 receive as many Byzantine messages as correct messages ({\em i.e.}, 2 out of 4).}%
    \label{fig:time-travel}
\end{figure}

\section{Model}\label{sec: model}

We consider a synchronous system equipped with a proof-of-work (PoW) primitive. The system is permissionless in that the set of participating nodes is unknown, and each node may become active or inactive at any time.

\textbf{Nodes.}
The system consists of an infinite set of nodes~$n_1, n_2, \dots$.
Each node is either \emph{correct} or \emph{Byzantine}.
Correct nodes follow their assigned protocol, while Byzantine nodes behave arbitrarily, subject to the PoW constraints described below.

\textbf{Ticks and steps.}
Time progresses in discrete \emph{ticks} numbered~$0, 1, 2, \dots$.
Every node has a clock indicating the current tick.
For some fixed integer parameter~$K \gg 1$, every~$K$ ticks are grouped into a \emph{step}; thus, each step~$i \geq 0$ consists of ticks~$\{iK, iK + 1, \dots, iK + K - 1\}$.

\textbf{Active and inactive nodes.}
At each tick, a non-zero, finite number of nodes are {\em active}; the rest are {\em inactive}. A node's status is determined by an unknown {\em activity function}.
Correct nodes change their status only at step boundaries---they are either active or inactive for an entire step.
We assume at least one correct node is active in every step.

\textbf{Computing power and the \dpow oracle.} The \dpow oracle produces \dpow evaluations and provides two methods.
The first allows a node to indicate the work it is willing to expend to obtain a \dpow; this value, in turn, determines the time it takes the oracle to respond, depending on the node's fixed \emph{computing power}~$\mathcal{P}(n)> 0$, which represents its hardware and energy budget.
The second method allows a node to verify whether a \dpow evaluation was produced by the oracle.

The \dpow oracle maintains a private map from pairs of the form~$\langle \gamma, w\rangle$, where~$\gamma$ is any value and~$w$ is a non-zero natural number called a \emph{weight}, to pairs of the form~$\langle dpow, s\rangle$, where~$dpow$ is a {\em unique} value called a \dpow evaluation and step~$s$ is~$\gamma$'s \emph{generation time}.
When~$\langle \gamma, w\rangle$ is mapped to~$\langle dpow,s \rangle$ for some~$s$, we say that~$dpow$ is a correct \dpow evaluation on~$\langle \gamma, w\rangle$.

Nodes can access the \dpow oracle via two methods:
\begin{itemize}
    \item $\textsc{dpow}(\gamma, w)$, where~$\gamma$ may be any value and~$w$ is a weight, representing the amount of work the caller wishes to expend.
          Upon a call~$\textsc{dpow}(\gamma, w)$ at a tick~$t$ by a node~$n$, the oracle ($i$) checks whether~$n$ has a pending \dpow evaluation.
          If so, it returns immediately without further action. Otherwise, the oracle ($ii$) checks whether it already has a mapping for~$\langle \gamma, w\rangle$.
          If not, it picks uniformly at random a \dpow evaluation~$dpow$ which does not appear in the map and \emph{registers} it by inserting the mapping~$\langle \gamma, w\rangle\rightarrow \langle dpow, s\rangle$, where~$s=\lfloor t/K \rfloor$ is the current step.
          Finally, ($iii$) the oracle schedules the delivery of the \dpow evaluation associated with~$\langle \gamma, w\rangle$ for the earliest tick~$t'> t$ such that ({\em a})~$n$ is active at~$t'$ and ({\em b})~$n$ has been active for a number of ticks~$\lceil wK/\mathcal{P}(n)\rceil$ between~$t$ (included) and~$t'$ (excluded).
         
          If a $\textsc{dpow}$~call and the corresponding delivery of its evaluation happen in steps~$s$ and~$s'$, respectively, we say that the call is \emph{within}~$[s, s']$.
          If~$s'=s$, we say the \dpow \emph{belongs} to step~$s$.
          When clear from context, we refer to a \dpow evaluation result as~$dpow$.
    \item $\textsc{verify}(dpow, \gamma, w)$, a Boolean function where~$dpow$ is a \dpow evaluation,~$\gamma$ is any value, and~$w$ is a weight.
          It returns true if and only if~$\langle \gamma, w\rangle$ appears in the oracle's internal map and is mapped to~$\langle dpow,s\rangle$ for some~$s$.
\end{itemize}
Adopting a Dolev-Yao~\cite{dolev-yao} approach, Byzantine nodes cannot obtain the \dpow evaluation of an input~$\langle\gamma, w\rangle$ without calling~$\textsc{dpow}(\gamma, w)$, unless they receive it in a message.

\textbf{The {\em correct supremacy} assumption.}
For a fixed real parameter~$0 \leq \rho\leq 1/2$, Byzantine nodes are~$\rho$-bounded:
For every interval of steps~$[s, s']$, let~$\Sigma_B$ be the sum of the weights of the \dpow evaluations of Byzantine nodes within the interval, and let~$\Sigma$ be the sum of the weights of \dpow evaluations of all nodes within the interval.
Then,~$\Sigma_B<\rho\Sigma$.

\textbf{Node and network behavior.} At each tick, every active correct node performs the following sequence of actions: ($i$) it receives a set of messages from the network, and possibly a \dpow evaluation scheduled for delivery at that tick by the \dpow oracle; ($ii$) it performs local computation, which may include invoking the \dpow oracle; and ($iii$) it broadcasts messages to the network.
Each message that a correct node disseminates is unique, {\em i.e.}, it has never appeared before in the system.
This uniqueness can be enforced, for example, by including random nonces in messages.
Byzantine nodes, when active, are not bound by the protocol. They may perform arbitrary computations, invoke the \dpow oracle with any arguments, and broadcast arbitrary messages.

The network is \emph{synchronous}, \emph{reliable}, and does not duplicate or generate messages. A message sent by a correct node at tick~$t$ is received at tick~$t+1$ by all correct nodes that are active at tick \(t+1\). That is, message delivery occurs atomically as the system transitions from tick~$t$ to tick~$t+1$.
Nodes that are inactive at tick~$t+1$ will receive these messages at the first tick~$t' > t$ at which they become active. We define a \emph{correct message} as one generated by a correct node; all other messages are considered \emph{Byzantine}.\footnote{Recall that correct nodes send unique, fresh messages, so there is no ambiguity in identifying them.}
Additionally, at each tick, every active correct node receives all Byzantine messages that have been observed in previous ticks by any correct node since it was last active.

\subsection{Total-Order Broadcast}
\label{sec:tob-def}

A \emph{transaction} is a string.
A \emph{block} is a data structure comprising a set of transactions and a pointer to another block.
A \emph{chain} is a collection~$\langle B_1,\dots, B_k\rangle$ of blocks where for every~$1 < i\leq k$, $B_i$ points to~$B_{i-1}$.

Each node consists of two components: a \emph{total-order broadcast} (TOB) {\em module} and a \emph{client module}. The client module can submit new blocks to the TOB module via a \texttt{submit} downcall. Conversely, the TOB module notifies the client of newly committed chains via a \texttt{commit} upcall.

The client continuously submits new blocks, so that each correct node always has at least one block that has been submitted but not yet included in any committed chain.

To formalize the behavior of a TOB implementation, we define the notions of \emph{compatible} and \emph{incompatible} chains. Given two chains $\Lambda_1$ and $\Lambda_2$, we say they are \emph{compatible} if one is a prefix of the other; otherwise, they are \emph{incompatible}.

\begin{definition}\label{def:tob}
    A TOB algorithm satisfies the following properties:
    \begin{itemize}
        \item Consistency: If two correct nodes commit chains $\Lambda_1$ and $\Lambda_2$, then $\Lambda_1$ and $\Lambda_2$ are compatible.
        \item Progress: Let $\Lambda$ be the longest chain committed by all correct nodes.
              At all times, with probability 1, $\Lambda$ eventually includes at least one more block submitted by a correct node.
    \end{itemize}
\end{definition}

\section{Sieve: Fending Off Time Travel}\label{sec: sieve}
Dynamically available protocols like that of Malkhi et al.~\cite{malkhi_towards_2023} use quorum intersection arguments.
These protocols count messages (carrying votes) and assume that each round contains a minimum fraction of correct messages.
In a permissionless model, however, this assumption exposes  them to \emph{time travel} attacks: even if correct nodes generate a majority of the messages during any time interval, Byzantine nodes can pre-generate messages and strategically release them later to outnumber correct messages.
We call such pre-prepared and delayed messages \emph{antique messages}.

To address this vulnerability, we introduce \emph{time-travel-resilient broadcast} (TTRB, \S\ref{sec: sieve-TTRB}), a broadcast abstraction designed to filter out antique messages. We then present Sieve, a protocol that implements TTRB using the \dpow oracle.
Sieve detects and prunes antique messages using two distinct filtering policies: \emph{Online-Sieve} and \emph{Bootstrap-Sieve}. 
Online-Sieve is efficient but can only be used by nodes that were already online in the previous step; nodes that newly become active must first catch up using Bootstrap-Sieve before they can switch to Online-Sieve.

We begin with an overview of Sieve's operation (\S\ref{sec: sieve-overview}), and proceed to a detailed description of the Sieve protocol followed by correct nodes (\S\ref{sec:sieve-main}) treating the filtering policies as black-box functions.
Finally, we describe the Online-Sieve~(\S\ref{sec:sieve-online}) and Bootstrap-Sieve~(\S\ref{sec:sieve-bootstrap}) mechanisms.

\subsection{Time-Travel-Resilient Broadcast}\label{sec: sieve-TTRB}

A time-travel-resilient broadcast protocol provides a black-box broadcast abstraction.
At each tick~$t$, the runtime invokes~$\textsc{UponNewTick}(t, \mathcal{M}, \mathcal{R})$, where~$\mathcal{M}$ is a set of messages received over the network and~$\mathcal{R}$ is a set of \dpow evaluations.
In response, TTRB may interact with the \dpow oracle (see Figure~\ref{fig:sieve-mmr-stack}).
When the tick~$t$ is the first tick of a step~$s$, TTRB makes a~$\textsc{TTRBDeliver}(s, L)$ up-call to the application, delivering a set of messages~$L$ of the form~$\langle m, v, w\rangle$.
For each tuple~$\langle m, v, w\rangle$,~$v$ is a DPoW evaluation with weight~$w$ and~$m$ is a message payload.
The application responds by returning  a message~$msg$ that  the TTRB layer then broadcasts by executing~$\textsc{TTRBCast}(msg)$.
TTRB guarantees the following:

\begin{definition}[TTRB implementation]
    \label{def:ttrb}
    For every correct node and every step~$s$, if the node calls~$\textsc{TTRBDeliver}(s, L)$ in step~$s$, then:
    \begin{enumerate}[label=TTRB\arabic*, ref=TTRB\arabic*, left=0pt]
        \item\label{TTRB1} For every tuple~$\langle m, v, w\rangle\in L$,~$v$~is a correct DPoW evaluation on~$m$ with weight~$w$ and oracle generation time~$s-1$.
        \item\label{TTRB2} For every correct node~$n$ active at step~$s-1$, if upon executing~$\textsc{TTRBDeliver}$ node~$n$ returns a message~$m$ to be~\textsc{TTRBcast} in step~$s-1$, then~$m$ appears in~$L$ with weight~$\mathcal{P}(n)$.
    \end{enumerate}
\end{definition}
A protocol is said to implement TTRB if it satisfies Definition~\ref{def:ttrb}.
When the step~$s$ is clear from the context, we also say that the set~$L$ from Definition~\ref{def:ttrb} satisfies TTRB.

\begin{figure}[ht]
    \centering
    \includegraphics[width=0.5\columnwidth]{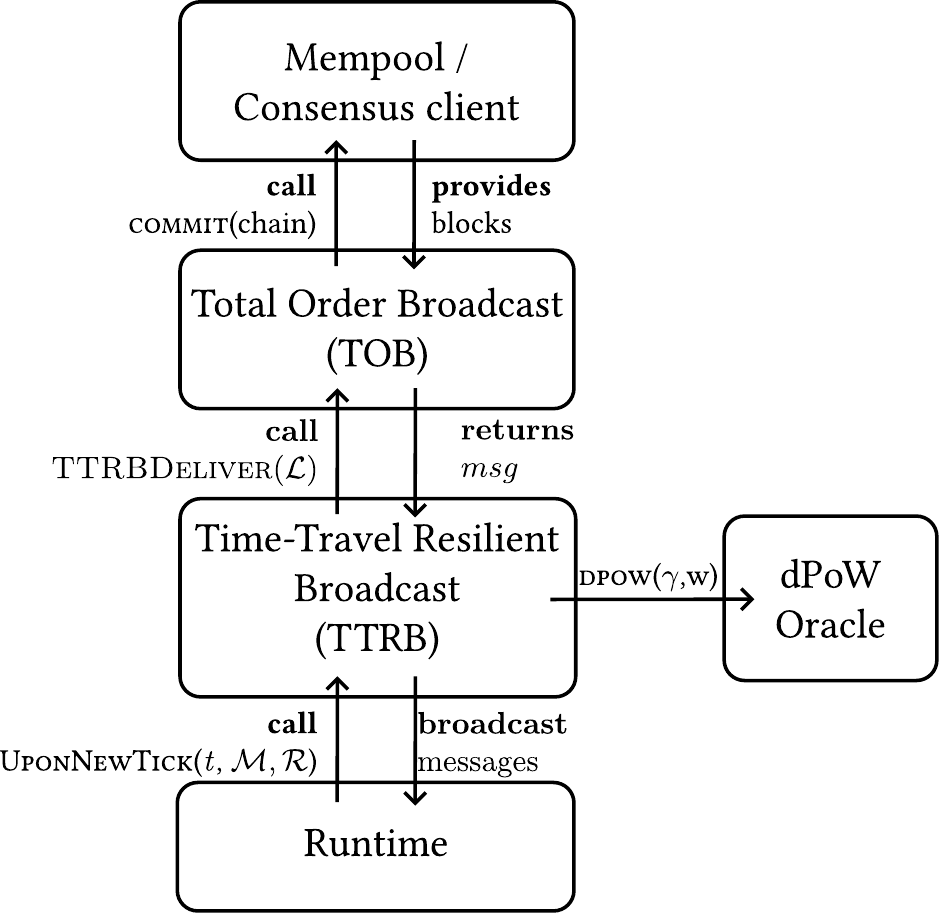}
    \caption{Protocol stack of the Sieve-MMR algorithm.}%
    \label{fig:sieve-mmr-stack}
\end{figure}

\subsection{Overview of Sieve}\label{sec: sieve-overview}

Sieve implements TTRB, assuming a~$\rho$-bounded adversary with~$\rho\leq 1/2$.
It does so by identifying antique messages and filtering them out from the set of messages received by a correct node at each step.

To filter out antique messages, Sieve leverages the synchronous nature of the model: it requires correct nodes to send messages only in the last tick of a step.
This means that a message generated by a correct node in some step is received by other nodes only in the first tick of the next step.
Now, consider a correct node at the first tick of some step~$s$ that has received a set of messages claiming to be from step~$s-1$ and wants to identify the antique ones among them.
An antique message~$m$ that claims to be from~$s-1$ when it is instead from some prior step~$s'$ cannot causally depend on a correct message~$m'$ sent in step~$s-2$, because~$m$ was generated before~$m'$ was sent ($m'$ was sent in the last tick of step~$s-2$).
Thus, Sieve discards antique messages by filtering out, at each step, the messages that are not causally preceded by a correct message from the previous step. However, identifying these causal relationships is challenging.

To make causal relationships between messages explicit and to preclude costless generation of messages, Sieve requires each node to include in every message
($i$)~a \emph{coffer} field containing messages from the previous step that the node deems non-antique and
($ii$)~the \dpow evaluation of the tuple consisting of the message payload, its coffer, and a random nonce (to make messages unique (\S\ref{sec: model})).
This gives rise to a DAG of \dpow evaluations (\S\ref{sec:sieve-main}), where vertices are messages and edges encode inclusion in message coffers.
Sieve analyzes the DAG by applying to it one of two filtering policies: Bootstrap-Sieve and Online-Sieve.

At each step~$s$, Bootstrap-Sieve computes a set of non-antique messages from step~$s-1$, which includes all correct ones, relying on all messages received thus far; Online-Sieve does the same, but relying only on messages claiming to belong to step~$s-1$ and a set of messages from step~$s-2$ satisfying TTRB (\S\ref{sec: sieve-TTRB}).
A correct node thus uses Bootstrap-Sieve at the first step it becomes active after a period of inactivity and then switches to Online-Sieve as long as it remains active.
If a node becomes inactive, it must run Bootstrap-Sieve again the next step that it becomes active.

\subsection{Sieve}\label{sec:sieve-main}

Sieve is detailed in~\Cref{alg:sieve}, where Bootstrap-Sieve and Online-Sieve are treated as black-box function calls.
Each node~$n$ maintains four state components: the set~$\mathcal{M}$ of messages it has received so far, the step~$\textit{last-active}$ in which it was last active, the set of non-antique messages~$\mathcal{L}$ received in that step, and a TTRB message~$\textit{pending-ttrb-msg}$ whose \dpow evaluation is currently pending.

Node~$n$ starts the execution by running its \textsc{Main} procedure (Line~\ref{algo:sieve:main}).
In this procedure, as long as~$n$ wants to be active, it calls the procedure~$\textsc{UponNewTick}(t, \mathcal{M}', \mathcal{R})$ (defined in Line~\ref{alg:sieve-upon-new-tick}) at each tick~$t$, where~$\mathcal{M}'$ is the set of messages that~$n$ has received since~$\textit{last-active}$, and~$\mathcal{R}$ is the set of oracle responses scheduled for~$t$. 
Node~$n$ first adds all messages in~$\mathcal{M}'$ to~$\mathcal{M}$ (Line~\ref{algo:sieve:add-to-received-msgs}).
Then, there are three cases, depending on whether the current tick is the first tick of the current step, the last tick of the current step, or neither:

\begin{itemize}
    \item If~$t$ is neither the first tick nor the last tick of the current step, nothing else happens.
    \item If~$t$ is the first tick of the current step~$s=\lfloor t/K\rfloor$ (so~${t\bmod K=0}$), then~$n$ calls~$\textsc{NewStep}(s)$ (Line \ref{algo:sieve:new-step-call}).
          In~$\textsc{NewStep}(s)$, depending on whether~$n$ was active in the last step or not,~$n$ determines the set $\mathcal{L}$ of non-antique messages from the previous step~$s-1$ using either Online-Sieve (Line \ref{algo:sieve:call-online-sieve}) or Bootstrap-Sieve (Line \ref{algo:sieve:call-bootstrap-sieve}), and records this set in~$\mathcal{L}$.
          Then Sieve delivers~$s$ and~$\mathcal{L}$ to the application (Line \ref{algo:sieve:call-deliver}).

          The application returns a message for broadcast, which is assigned to~$\textit{ttrb-msg}$.
          Node~$n$ then executes~$\textsc{TTRBCast}(\textit{ttrb-msg})$ (defined in Line \ref{algo:sieve:ttrbcast}), forming the triple~$\gamma=\left\langle \textit{ttrb-msg}, \mathcal{L}, r \right\rangle$, where~$r$ is a random nonce.
          It then records \textit{ttrb-msg} as the current message with a pending \dpow evaluation by assigning it to the variable \textit{pending-ttrb-msg},
          and calls the \dpow oracle (Line~\ref{algo:sieve:call-dpow}) to obtain a \dpow evaluation of~$\gamma$ with weight~\(\mathcal{P}(n)\).
          Note that the weight is chosen so that~$n$ will receive the \dpow response in the last tick of the current step, and thus~$n$ will be able to broadcast it to all correct nodes by the end of the step.
    \item
          If~$t$ is the last tick of the current step (so~$t\bmod K=K-1$), since~$n$ has called the \dpow oracle at the first tick of the step with weight~$\mathcal{P}(n)$,~$n$ receives a set of \dpow responses~$\mathcal{R}=\{dpow\}$ where~$dpow$ is the \dpow evaluation corresponding to the pending TTRB message in the variable \textit{pending-ttrb-msg}.
          Node~$n$ then broadcasts on the network the message~$m=\langle\textit{pending-ttrb-msg}, \lfloor t/K\rfloor, \mathcal{L}, dpow, \mathcal{P}(n)\rangle$, where~$\lfloor t/K\rfloor $ is the current step,~$\mathcal{L}$ is the set of non-antique messages computed during the first tick of the step,~$dpow$ is the \dpow evaluation just received, and~\(\mathcal{P}(n)\) is the weight of~$dpow$.
\end{itemize}

\begin{algorithm}
    \caption{The Sieve algorithm, code for node $n$.}%
    \label{alg:sieve}
    \begin{algorithmic}[1]
        \LeftComment{State variables:}
        \State $\mathcal{M} \gets \{\}$ \Comment{Set of messages received so far}
        \State{$\textit{last-active} \gets -1$}\Comment{Last step in which the node was active}
        \State{$\mathcal{L} \gets \{\}$}\Comment{Non-antique messages received in last active step}
        \State{$\textit{pending-ttrb-msg}\gets \text{none}$}\Comment{TTRB message waiting for \dpow evaluation}
        \Procedure{UponNewTick}{$t$, $\mathcal{M}'$, $\mathcal{R}$}\label{alg:sieve-upon-new-tick}
            \State $\mathcal{M}\gets \mathcal{M}\cup \mathcal{M}'$\label{algo:sieve:add-to-received-msgs}
            \If{$t\bmod K = 0$}\Comment{First tick of the step}
                \State $\Call{NewStep}{t/K}$\label{algo:sieve:new-step-call}
            \ElsIf{$t \bmod K = K-1$}\Comment{Last tick of the step}
                \State{$\{dpow\}\gets \mathcal{R}$} \Comment{extracts the single \dpow response}
                \State{$m \gets \langle\textit{pending-ttrb-msg}, \lfloor t/K\rfloor, \mathcal{L}, dpow, \mathcal{P}(n)\rangle$}
                \State \Call{Broadcast}{$m$} \Comment{Broadcast on the network}
            \EndIf
        \EndProcedure
        \Procedure{NewStep}{$s$}
            \If{$\textit{last-active} = s-1$}
                \State{$\mathcal{L} \gets$ \Call{OnlineSieve}{$s,\mathcal{M},\mathcal{L}$}}\label{algo:sieve:call-online-sieve}
            \Else
                \State{$\mathcal{L} \gets$ \Call{BootstrapSieve}{$s,\mathcal{M}$}}\label{algo:sieve:call-bootstrap-sieve}
            \EndIf
            \State{$\textit{last-active} \gets s$}
            \State{$\textit{ttrb-msg} \gets \textsc{TTRBDeliver}(s, \mathcal{L}$})\Comment{Call the upper layer}\label{algo:sieve:call-deliver}
            \State{$\textsc{TTRBCast}(\textit{ttrb-msg}$)}\label{alg:sieve-self-ttrbcast-call}
        \EndProcedure
        \Procedure{TTRBCast}{\textit{ttrb-msg}}\label{algo:sieve:ttrbcast}
            \State{\textit{r} $\gets$ a random number}
            \State{$\gamma \gets \langle \textit{ttrb-msg}, \mathcal{L}, r\rangle$}
            \State{$w \gets \mathcal{P}(n)$}\Comment{Weight of the \dpow}
            \State{$\textit{pending-ttrb-msg}\gets \textit{ttrb-msg}$}
            \State{\Call{dpow}{$\gamma$, $w$}\Comment{Call the \dpow oracle}}\label{algo:sieve:call-dpow}
        \EndProcedure
        \Procedure{Main}{}\label{algo:sieve:main}\Comment{Execution starts from here}
        \While{$n$ wants to be active}
        \State{$t \gets \textsc{time}.\textit{now}()$}
        \State{$\mathcal{M}' \gets \text{messages received since}\ \textit{last-active}$}
        \State{$\mathcal{R} \gets \text{oracle responses received for tick}\ t$}
        \State{$\textsc{UponNewTick}(t,\mathcal{M}',\mathcal{R})$}
        \State{$\textsc{Wait For tick}\ t+1$}
        \EndWhile
        \EndProcedure
    \end{algorithmic}
\end{algorithm}

Before moving on to explain Online-Sieve and Bootstrap-Sieve, we need the following concepts.

\textbf{Step, weight, and timestamp of messages.}
Given a Sieve message~$m=\langle \textit{ttrb-msg}, s, \mathcal{L}, dpow, w\rangle$, we say that~$m$ is a \emph{timestamp-$s$ message} with weight~$w$, and we also write~$\texttt{weight}(m)$ for~$w$.
Given a set~$M$ of messages, we write~$M_s$ for the set of timestamp-$s$ messages in~$M$.
Moreover, abusing notation, the weight~$\texttt{weight}(M)$ of~$M$ is~$\sum_{m\in M}\texttt{weight}(m)$.
Given two sets of messages~$M_1\subseteq M_2$ and~$0 < \rho \leq 1$,~$M_1$ is \emph{strictly more than a weighted fraction~$1-\rho$ of}~$M_2$ if~$\texttt{weight}(M_1) > (1-\rho)\cdot\texttt{weight}(M_2)$.

We say a message~$m$ is \emph{generated at step~$s$} if~$s$ is the generation time associated in the \dpow oracle with the \dpow that~$m$ carries.
As guaranteed in our model by the \dpow oracle, this is the step at which the \dpow oracle was called to obtain the \dpow evaluation attached to the message.
We use it in our definitions and proofs, but it is not accessible to the nodes.
We sometimes refer to the coffer of a message~$m$ as~$\operatorname{coffer}(m)$.

Note that a timestamp-$s$ message is not necessarily generated at step~$s$, as an antique message can maliciously claim that it belongs to step~$s$.
Similarly, Byzantine nodes can assign any weight they want to a message.

\textbf{Message DAGs.}
A set of messages~$M$ forms a directed acyclic graph (DAG) defined as the graph whose vertices are the messages in~$M$, such that there is an edge from message~$m_1$ to message~$m_2$ if and only if~$m_2$ is in~$m_1$'s coffer. No cycles are possible because the \dpow oracle is a random oracle, and thus each new \dpow evaluation is a fresh random value. Depending on the context, we will refer to a collection of messages as both a set and a DAG.

\subsection{Online-Sieve}\label{sec:sieve-online}

Consider a correct node~$n$ at some step~$s$ that is scrutinizing a correct timestamp-$(s-1)$ message $m$ to check whether it is antique, and assume that~$n$ has a set~$\mathcal{L}$ of timestamp-$(s-2)$ messages satisfying TTRB.
Moreover, assume inductively that the execution up to step~$s-1 \geq 0$ satisfies TTRB.

\subsubsection{Intuition}
Both~$\mathcal{L}$ and~$m$'s coffer satisfy TTRB.
Consider~$\mathcal{L}$: (i) it contains all the correct timestamp-$(s-2)$ messages, and (ii) all messages in~$\mathcal{L}$ are generated at~$s-2$.
This means, according to our correct supremacy assumption, that the set of correct timestamp-$(s-2)$ messages~$\mathcal{C}\subseteq \mathcal{L}$ is strictly more than a weighted fraction~$1-\rho$ of~$\mathcal{L}$.
The same logic works for~$m$'s coffer as well.

This observation gives us a filtering rubric that can be checked efficiently.
If a timestamp-$(s-1)$ message~$m$ is correct, the intersection of~$\mathcal{L}$ and~$m$'s coffer should be strictly more than a weighted fraction~$1-\rho$ of~$\mathcal{L}$, as it contains all correct timestamp-$(s-2)$ messages.
If, on the other hand,~$m$ is antique, its sender must have sent it before receiving any of the correct messages in~$\mathcal{L}$, because they did not exist yet.
Thus the intersection of its coffer with $\mathcal{L}$ will consist of less than a weighted fraction $1-\rho$ of $\mathcal{L}$.

\subsubsection{Algorithm}

The \textsc{OnlineSieve} sub-procedure receives three arguments: the current step~$s$, the set~$\mathcal{M}$ of messages received in~$s$, and the set~$\mathcal{L}$ of timestamp-$(s-2)$ non-antique messages computed in the previous step~$s-1$ (except~$\mathcal{L}=\emptyset$ if~$s<2$).
It must return a subset of~$\mathcal{M}$ containing all correct timestamp-$(s-1)$ messages and excluding any antique timestamp-$(s-1)$ message.

\Cref{alg:online-sieve} presents a pseudocode description of Online-Sieve.
First, Online-Sieve filters out from~$\mathcal{M}$ all messages that are not timestamp-$(s-1)$ messages or that have an invalid \dpow evaluation.
Then, out of the remaining messages, Online-Sieve selects every message~$m$ such that the weight of the messages in common between~$m$'s coffer and~$\mathcal{L}$ is more than a weighted fraction~$1-\rho$ of~$\mathcal{L}$.

\begin{algorithm}[t]
    \caption{Online-Sieve. Using the set~$\mathcal{L}$ of non-antique messages computed in the previous step~$s-1$, Online-Sieve filters out antique messages from the set~$\mathcal{M}$ of timestamp-$(s-1)$ messages received in step~$s$.}%
    \label{alg:online-sieve}
    \begin{algorithmic}[1]
        \Procedure{OnlineSieve}{$s, \mathcal{M}, \mathcal{L}$}\label{alg:opt-inputs}
            \State{$\mathcal{M}_{s-1} \gets \{m \in \mathcal{M} \mid \text{$m$ has timestamp~$s-1$}\}$}
            \State{$\mathcal{V}_{s-1} \gets \{m \in \mathcal{M}_{s-1} \mid \text{$m$ has a valid \dpow}\}$}
            \State{\Return$ \{m\in \mathcal{V}_{s-1} \mid \texttt{weight}(\text{coffer}(m)\cap \mathcal{L}) > (1-\rho)\cdot\texttt{weight}(\mathcal{L})\}$}\label{alg:opt-forall-m}\label{alg:opt-ls-rm}
        \EndProcedure
    \end{algorithmic}
\end{algorithm}

\subsubsection{Example}%
\label{sec:online-sieve-example}

\begin{figure}[h]
    \centering
    \includegraphics[width=.5\columnwidth]{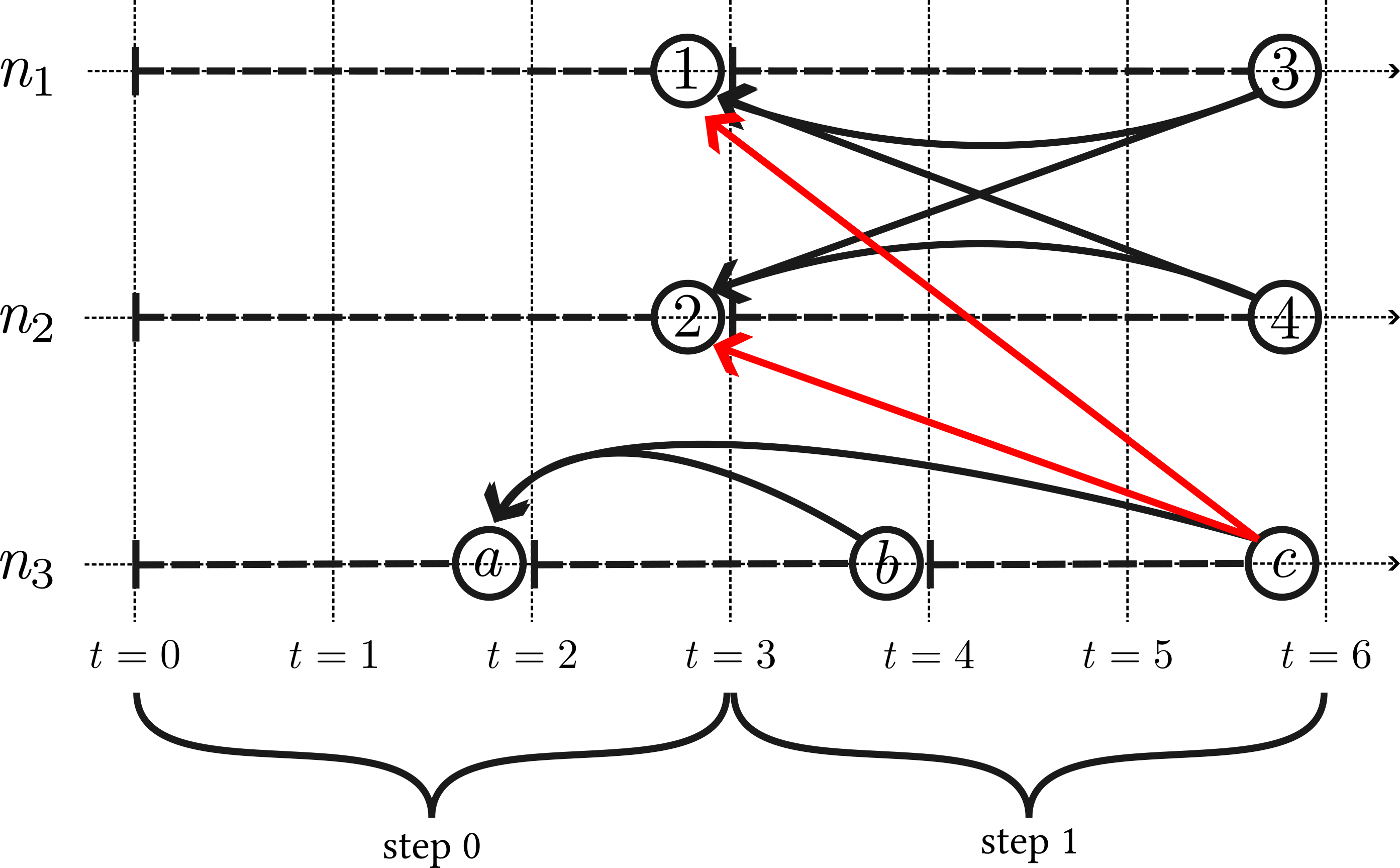}
    \caption{Example execution in which correct nodes~$n_1$ and~$n_2$ use Online-Sieve and a Byzantine node~$n_3$ produces an antique message \msg{b}. When building their timestamp-$2$ messages at~$t=6$, correct nodes must filter out \msg{b}.}%
    \label{fig:online-sieve-example}
\end{figure}

Consider the execution depicted in Figure~\ref{fig:online-sieve-example}, where~$K=3$.
Nodes~$n_1$ and~$n_2$ are correct and have computing power~$1$, which means they obtain \dpow evaluations of weight 1 every 3 ticks.
Node~$n_{3}$ is Byzantine with computing power~$1.5$, and obtains \dpow evaluations of weight 1 every 2 ticks.
Note that correct supremacy holds: in any interval of steps~$[s,s']$, the ratio of Byzantine messages is at most~$3/7$, which is smaller than~$1/2$.

Dashed horizontal segments represent time intervals during which nodes are waiting for \dpow responses, ending with a circle that represents a message and its \dpow response; the edges connecting messages represent coffer inclusion.
Red edges signify a coffer inclusion relation between correct and Byzantine messages.
For example, messages~\msg{3} and~\msg{4} hold messages~\msg{1} and~\msg{2} in their coffers, while message~\msg{c}'s coffer contains the correct messages~\msg{1},~\msg{2}, and the Byzantine message~\msg{a}.
Messages~\msg{1},~\msg{2}, and~\msg{a} are timestamp-$0$ messages, while~\msg{3},~\msg{4},~\msg{b}, and~\msg{c} are timestamp-$1$ messages.
Note that~\msg{b} is antique, since it has timestamp~$1$ but it started before step~$1$,
and that its coffer does not and cannot possibly contain any of the correct timestamp-0 messages because they had not been generated yet when~\msg{b} started.

Consider~$n_1$ at step~$2$ ($t=6$), applying Online-Sieve to~$\{\msg{3}, \msg{4}, \msg{b}, \msg{c}\}$ to eliminate antique messages.
Online-Sieve does not filter any timestamp-0 messages, so~$n_1$ has~$\mathcal{L} = \{\msg{1}, \msg{2}, \msg{a}\}$.
Let us scrutinize \msg{3}; the case for \msg{4} and \msg{c} is similar.
The coffer of \msg{3} is~$\{\msg{1}, \msg{2}\}$, so its intersection with~$\mathcal{L}$ is~$\{\msg{1}, \msg{2}\}$; the intersection accounts for~$2/3$ of~$\mathcal{L}$'s weight, and \msg{3} will not be discarded.
Now consider \msg{b}: its coffer is~$\{\msg{a}\}$, which has an intersection~$\{\msg{a}\}$ with~$\mathcal{L}$.
The intersection accounts for less than~$1/2$ of~$\mathcal{L}$'s weight, and \msg{b} will be discarded.
We conclude that~$n_1$ obtains the set~$\{\msg{3}, \msg{4}, \msg{c}\}$ via Online-Sieve, which satisfies TTRB: ($i$) they are all timestamp-1 messages, and ($ii$) correct timestamp-1 messages are strictly more than a weighted fraction~$1/2$ of the set.

\subsection{Bootstrap-Sieve}\label{sec:sieve-bootstrap}

\begin{algorithm}[t]
    \caption{Bootstrap-Sieve on an input set of messages~$\mathcal{M}$ for a correct node $n$ at step~$s$.}%
    \label{alg:bootstrap-sieve}
    \begin{algorithmic}[1]
        \Procedure{BootstrapSieve}{$s, \mathcal{M}$}
        \State{$\tilde{\mathcal{L}} \gets \operatorname{VerifyDPoWsRecursively}(\mathcal{M})$}
        \label{alg-create-dag}
        \For{$s'\gets 1 \dots s-1$}\label{alg:bootstrap-start-iterative-pruning}
        \Comment{Removal Phase}
            \ForAll{$m \in \tilde{\mathcal{L}}_{s'}$}\label{alg:bootstrap-pick-messages}
            \Comment{The outcome of Removal Phase might depend on the order.} %
                \State{$C \gets$ a timestamp-$(s'-1)$ consistent DAG within~$\tilde{\mathcal{L}}$ containing~$m$, with maximal weight}
                \label{alg:bootstrap-find-consistent-dag}
                \If{no such~$C$ exists}
                    \State{$\tilde{\mathcal{L}} \gets \tilde{\mathcal{L}} \setminus \{m\}$}
                    \label{alg:bootstrap-remove-no-dag}
                \Else
                \State{$A \gets$ the seed of~$C$} %
                \If{$\exists B \subseteq \tilde{\mathcal{L}}_{s'-1} : A \cap B = \emptyset$ and~$C$ has a lower weight than a DAG consistent with~$B$}

                    \State{$\tilde{\mathcal{L}} \gets \tilde{\mathcal{L}} \setminus \{m\}$}
                    \label{alg:bootstrap-remove-message-heavier-dag}
                \EndIf
                \EndIf
            \EndFor
        \EndFor

        \Return{$\tilde{\mathcal{L}}_{s-1}$}
        \EndProcedure
    \end{algorithmic}
\end{algorithm}

Online-Sieve relies on an up-to-date set~$\mathcal{L}$ of non-antique messages computed in the previous step.
If the node was not active in the previous step, it has no such set~$\mathcal{L}$ available and it therefore cannot use Online-Sieve.
This is where Bootstrap-Sieve enters the picture: it allows a node newly active in a step~$s$ to compute, based on messages it has received so far, a set~$\mathcal{L}$ of timestamp-$(s-1)$ messages containing all correct timestamp-$(s-1)$ messages and no antique messages.

\subsubsection{Intuition}

Consider a node~$n$ newly active in step~$s>1$.
As per the model, in step~$s$,~$n$ receives a set of messages including all the messages sent by correct nodes in all steps~$s'<s$.
To implement TTRB, node~$n$ must now filter out all antique timestamp-$(s-1)$ messages.
To do this, node~$n$ iteratively filters out antique timestamp-$s'$ messages for~$0<s'<s$.
Each iteration~$s'$ relies on having filtered out the antique messages in all steps before~$s'$ (this is trivially the case if~$s'=1$ since by definition there cannot be antique timestamp-0 messages).

Next we informally explain how node~$n$ filters out antique timestamp-$s'$ messages assuming it has already filtered out all antique messages from previous steps.
The idea relies on the notions of consistent successors of a set of messages and of consistent DAGs of messages, which we define next.
\begin{definition}[Consistent successor]\label{def:consistent-succ}
    Given a set of messages~\,$X$, a message~$m$ is a consistent successor of \,$X$ when \,$X$ is a subset of~$m$'s coffer and \,$X$ is strictly more than a weighted fraction~$1-\rho$ of~$m$'s coffer.
\end{definition}
\begin{definition}[DAG of messages consistent with a set of timestamp-$s$ messages]\label{def:consistent-dag}
    A set of messages~$C$ is a consistent DAG when~$C$ is of the form~$C = X_{s}\cup X_{s+1}\cup X_{s+2}\cup \dots$, where for every integer~$s'\ge s$, every member of~$X_{s'+1}$ is a consistent successor of~$X_{s'}$.
    When all messages in~$X_{s}$ are timestamp-$s$ messages, we say that~$C$ is a timestamp-$s$ consistent DAG, or just a timestamp-$s$ DAG when clear from the context.
    We also say that~$C$ is a DAG consistent with~$X_{s}$, and that~$X_{s}$ is the seed of~$C$.
\end{definition}
Let us call the set of correct messages sent in step~$s'-1$ or later as~$C_{s'-1}^+$.
Note that, by the definition of TTRB, in every execution satisfying TTRB,~$C_{s'-1}^+$ forms a consistent DAG.

Now consider an antique timestamp-$s'$ message~$m$.
Message~$m$'s generation time is before~$s'$, which means it is also before any correct timestamp-$(s'-1)$ messages were sent.
Thus,~$m$'s coffer does not contain any correct timestamp-$(s'-1)$ messages.
Therefore, if~$B_{s'-1}$ is a timestamp-$(s'-1)$ consistent DAG containing the coffer of~$m$, then~$B_{s'-1}$ must be disjoint from~$C_{s'-1}^+$; otherwise, some correct message's coffer would contain both a supermajority of correct messages and a supermajority of Byzantine messages, which is not possible.
We prove this formally in~\Cref{lem:disjoint-remains-disjoint}.

Finally, since we have assumed that node~$n$ has already eliminated all antique timestamp-$(s'-1)$ messages, we have that all messages in both~$B_{s'-1}$ and~$C_{s'-1}^+$ were generated in step~$s'-1$ or after.
Thus, by the correct supremacy assumption,~$B_{s'-1}$ has strictly lower weight than~$C_{s'-1}^+$.
The idea is then, for each message~$m$, to ($i$) look for some heaviest consistent DAG containing~$m$, and to ($ii$) discard~$m$ if there exists a disjoint and heavier consistent DAG.

\subsubsection{Algorithm}

A pseudocode description of Bootstrap-Sieve appears in~\Cref{alg:bootstrap-sieve}.
Bootstrap-Sieve takes the current step~$s$ and the set of messages received so far,~$\mathcal{M}$, as input.
Then, for each message~$m$ in~$\mathcal{M}$, the node~$n$ verifies, using the \dpow oracle, that all the \dpow{}s of all the messages reachable from~$m$ in the DAG~$\mathcal{M}$ are valid.
Any message with an invalid \dpow is eliminated, and the remaining set of messages is assigned to the variable~$\tilde{\mathcal{L}}$.
The node then starts the iterative pruning process (Line~\ref{alg:bootstrap-start-iterative-pruning}).
At each iteration~$s'$ and for each timestamp-$s'$ message~$m$ it first identifies a heaviest timestamp-$(s'-1)$ DAG~$C$ containing~$m$ and consistent with some seed within~$\tilde{\mathcal{L}}$ (Line~\ref{alg:bootstrap-find-consistent-dag}), and rejects~$m$ if ($i$) no such consistent DAG exists (Line~\ref{alg:bootstrap-remove-no-dag}) or if ($ii$) there exists a heavier timestamp-$(s'-1)$ DAG consistent with some other seed that is also disjoint from~$C$ (Line~\ref{alg:bootstrap-remove-message-heavier-dag}).
It finally returns the set of timestamp-$(s-1)$ messages remaining in~$\tilde{\mathcal{L}}$.

Note that the algorithm might produce different results depending on the order in which messages are selected in Line~\ref{alg:bootstrap-pick-messages} of Algorithm~\ref{alg:bootstrap-sieve}.
Specifically, while Bootstrap-Sieve provably discards antique messages, it gives no guarantees on other Byzantine messages.
It might discard or keep Byzantine messages that do contain at least one correct message from the corresponding previous step, depending on the order in which messages are selected for scrutiny.

\subsubsection{Example 1: Online-Sieve is Not Enough}
Consider a node~$n$ that joins the execution in Figure~\ref{fig:bootstrap-sieve-example} at step 2, where objects have the same semantics as in Figure~\ref{fig:online-sieve-example}.
We show that iteratively running Online-Sieve leads to a violation of TTRB, and we need Bootstrap-Sieve.
The Byzantine node~$n_3$ poses messages~$\{\msg{a}, \msg{b}\}$ and~$\{\msg{c}\}$ as timestamp-0 and timestamp-1 messages, respectively.
Let~$\mathcal{L}_0$ and~$\mathcal{L}_1$ be the set of timestamp-0 and timestamp-1 messages that~$n$ obtains after removing antique messages, respectively.
Sieve does not discard timestamp-0 messages, so~$n$ will obtain~$\mathcal{L}_0 = \{\msg{1}, \msg{2}, \msg{a}, \msg{b}\}$.
Consider now the timestamp-1 message \msg{3}, whose coffer is~$\{\msg{1}, \msg{2}\}$.
The intersection of this coffer with~$\mathcal{L}_0$ is~$\{\msg{1}, \msg{2}\}$, which is not strictly more than a weighted fraction~$1/2$ of~$\mathcal{L}_0$.
Node~$n$ thus discards, in direct violation of TTRB, the correct timestamp-1 message \msg{3}.
This violation arises from the fact that~$\mathcal{L}_0$, \emph{at the time that~$n$ derives it}, does not satisfy TTRB: it has an equal number of correct and Byzantine messages.
A correct node that was present during the entire execution would have~$\mathcal{L}_0 = \{\msg{1}, \msg{2}, \msg{a}\}$, and thus would not have discarded message \msg{3}.
Note that our correct supremacy assumption is intact: at step 0, message \msg{b} was still not around and correct timestamp-0 messages were a majority.
It is only later on that Byzantine nodes can use their computational power, represented by \msg{b}, to confuse a newly joining correct node at step 2 when it is trying to reconstruct the history of the execution.
Also note that there are no antique messages here; this attack simply shows that Online-Sieve on its own is vulnerable even without time travel attacks.

\begin{figure}[ht]
    \centering
    \includegraphics[width=.5\columnwidth]{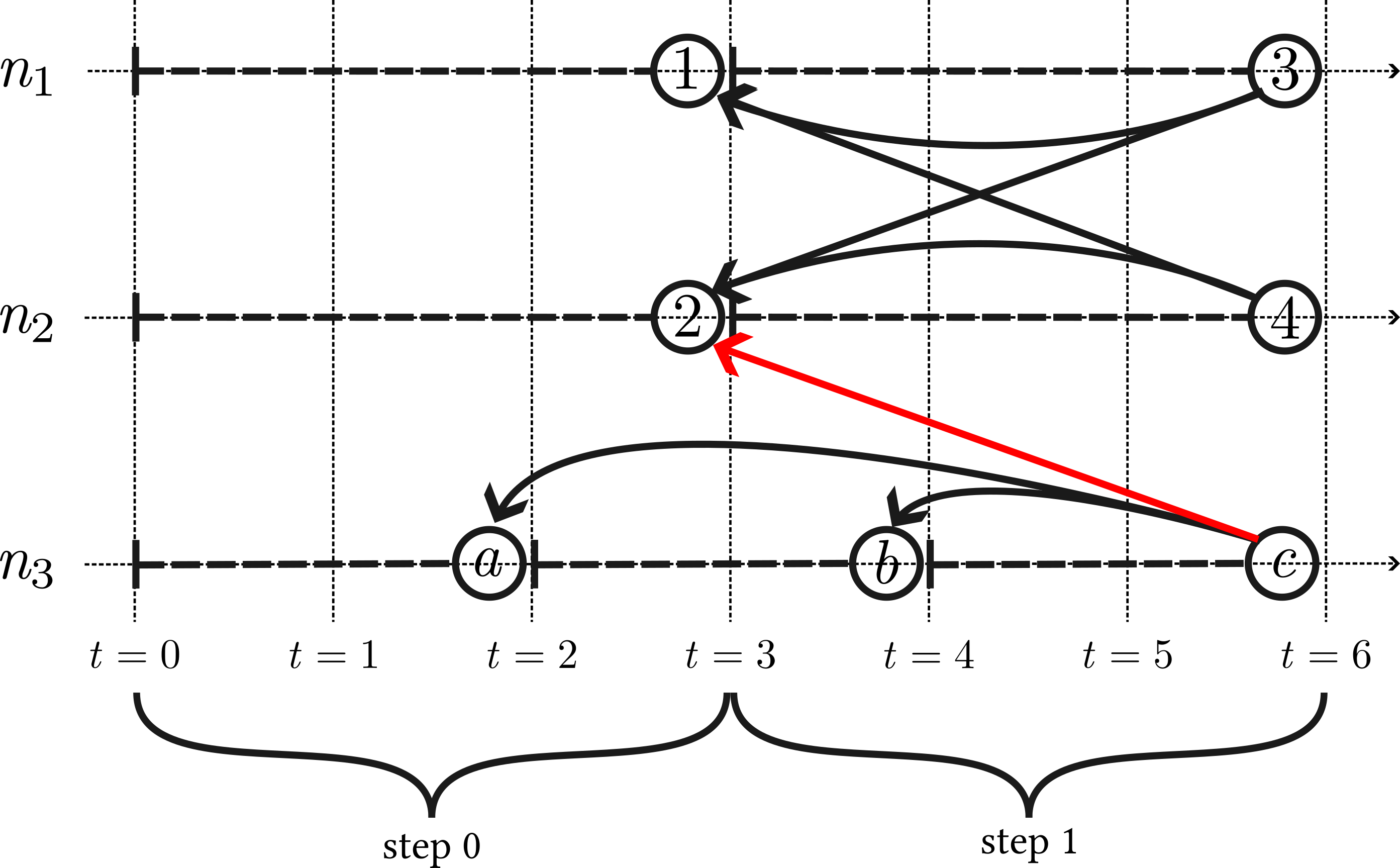}
    \caption{An execution showing that Online-Sieve on its own is not enough.
      It shows what a new node~$n$ joining at step 2 sees;~$n$ has to identify antique messages---there are none---but applying Online-Sieve at step 1 and then at step 2 ends up discarding correct timestamp-1 messages.}%
    \label{fig:bootstrap-sieve-example}
\end{figure}

\subsubsection{Example 2: Bootstrap-Sieve Locates Antique Messages}
Consider Figure~\ref{fig:bootstrap-sieve-prunes-antique}, where nodes~$n_1$ and~$n_2$ are correct,~$n_3$ is Byzantine, and message \msg{b} is an antique message claiming to belong to step 1.
Consider a correct node that at step~$2$ receives timestamp-1 messages \msg{3}, \msg{4}, \msg{b}, and \msg{c}.
We show that Bootstrap-Sieve ($i$) retains correct messages \msg{3} and \msg{4}, and ($ii$) discards \msg{b}.

\begin{figure}[ht]
    \centering
    \includegraphics[width=.5\columnwidth]{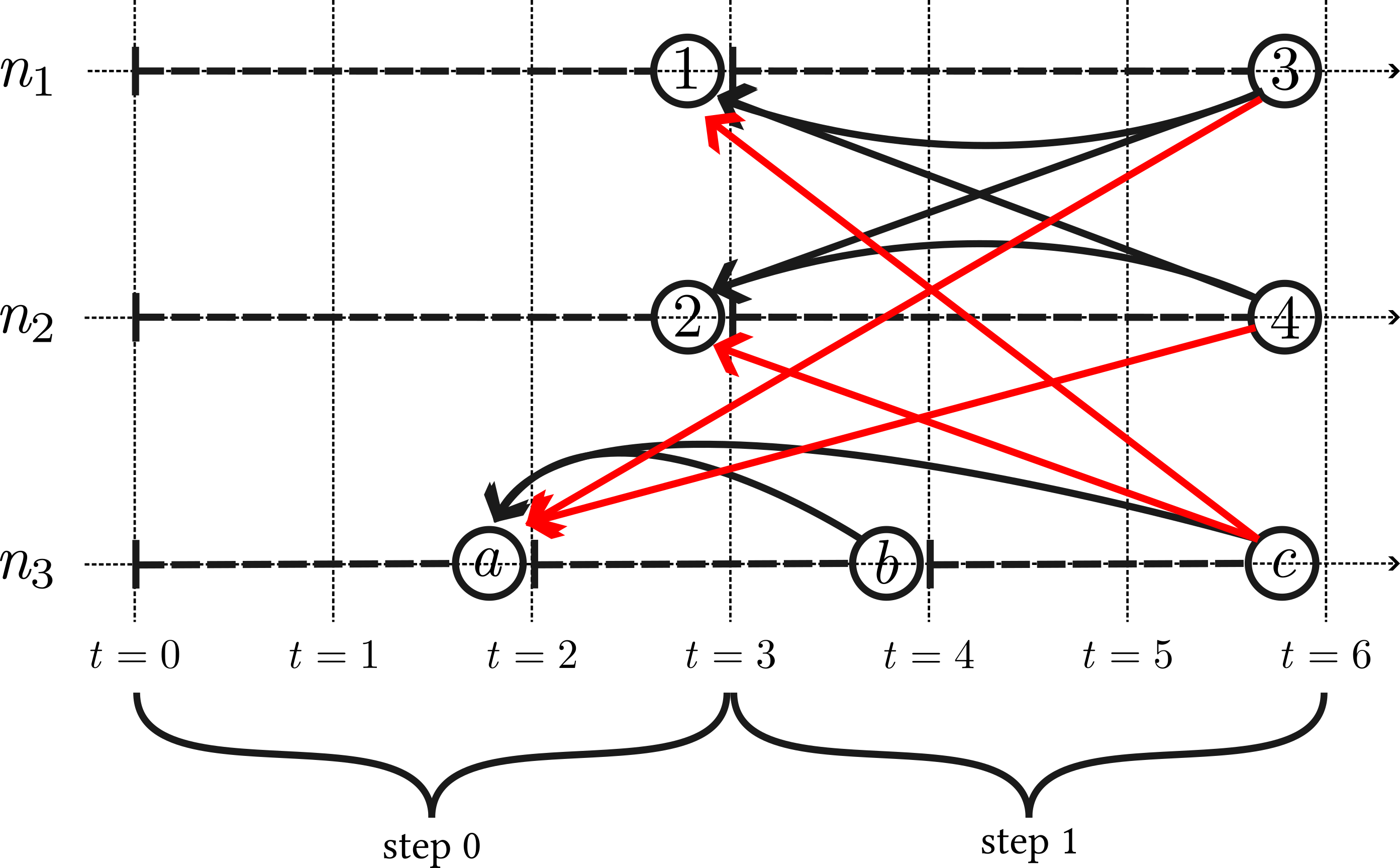}
    \caption{An example execution showing Bootstrap-Sieve in action.
    A newly joining correct node~$n$ at step 2 should identify \msg{b} as an antique timestamp-1 message.}%
    \label{fig:bootstrap-sieve-prunes-antique}
\end{figure}

Correct messages \msg{3} and \msg{4} have identical coffers, so Bootstrap-Sieve treats them the same; we are then only going to focus on
message \msg{3}.
Since \msg{3} is a timestamp-1 message, to determine its fate we need to identify the heaviest DAG -- call it~$\mathcal{C}$ -- that includes \msg{3} and is consistent with a subset of timestamp-0 messages ({\em i.e.}, with some subset of messages \msg{1}, \msg{2}, and \msg{a}).

In Figure~\ref{fig:bootstrap-sieve-prunes-antique},~$\mathcal{C}$
comprises vertices \{\msg{1},\msg{2}, \msg{a},\msg{3}, \msg{4}, \msg{c}\}.
Note that~$C$ is timestamp-0 consistent, as~$\{\msg{1}, \msg{2}, \msg{a}\}$ is a majority set in  the coffers of all messages in~$\{\msg{3}, \msg{4}, \msg{c}\}$, and thus \msg{3}, \msg{4}, and \msg{c} are all consistent successors of~$\{\msg{1}, \msg{2}, \msg{a}\}$.
There are no heavier timestamp-0 consistent DAGs disjoint from~$\mathcal{C}$; thus, \msg{3} is retained.

Consider now message \msg{b}.
The heaviest consistent DAG containing \msg{b} -- call it~$\mathcal{A}$ -- consists of vertices \msg{a} and \msg{b}. This time,
there exists a consistent DAG disjoint from and heavier than~$\mathcal{A}$, {\em i.e.},  the DAG with vertices \{\msg{1}, \msg{2}, \msg{3}, \msg{4}\}.
The set~$\{\msg{1}, \msg{2}\}$ is a majority in the coffers of \msg{3} and \msg{4}, which makes \{\msg{1}, \msg{2}, \msg{3}, \msg{4}\} a timestamp-0 consistent DAG.
Bootstrap-Sieve thus discards \msg{b}.

Note that Bootstrap-Sieve does not discard Byzantine message~\msg{c}.
This is how it should be, because~\msg{c} is not antique: indeed, Bootstrap-Sieve cannot distinguish it from a correct message! To see why, note that
the heaviest timestamp-0 consistent DAG containing~\msg{c}, ({\em i.e.}, \{\msg{1}, \msg{2}, \msg{a}, \msg{3}, \msg{4}, \msg{c}\}), is identical to the heaviest such DAG for correct messages~\msg{3} and~\msg{4}.
Thus~\msg{c}, like~\msg{3} and~\msg{4}, is retained.

\subsection{Practical Considerations}

Our model (\S\ref{sec: model}) is idealized and does not capture all aspects of real-world deployments.
There are important points that actual implementations of Sieve have to consider.

\textbf{Atomicity of actions.}
We have assumed in Algorithm~\ref{alg:sieve} that Bootstrap-Sieve and Online-Sieve execute atomically within a tick.
While this is plausible for Online-Sieve, which processes only messages from the previous step, it is not for Bootstrap-Sieve.
Execution history grows linearly, and the wall-clock time of executing Bootstrap-Sieve can get arbitrarily long.
Suppose that a node starts executing Bootstrap-Sieve at a step~$s$ and has not finished by the end of step~$s$.
The node must then buffer new messages it receives in steps greater than~$s$.
When it finishes executing Bootstrap-Sieve, it can execute Online-Sieve for every step~$s'>s$, using the buffered messages, until it catches up with the execution.
It can then proceed normally as per Algorithm~\ref{alg:sieve}.

\textbf{\dpow verification.}
We have assumed thus far that \dpow verification is instantaneous.
In practice, however, it might take some time.
This puts correct nodes at a disadvantage: they have to verify all received messages, whereas Byzantine nodes do not have to verify any messages.
Thus, Byzantine nodes are effectively faster in producing \dpow evaluations.
Our results hold as long as the correct supremacy assumption holds, which we state in terms of the number of \dpow evaluations within a stretch of steps.
The speedup affects only the ratio of Byzantine participation Sieve tolerates once we map number of \dpow evaluations to concrete resources like energy or hardware.

One way to remedy this is for correct nodes to keep upgrading their hardware to increase their power.
Another is to reduce the verification time of a \dpow, ideally to a constant.
This poses an interesting question for applied cryptography: \emph{is there an implementation of our black box \dpow abstraction that can be verified in constant time?}

\textbf{Network synchrony.}
Fully permissionless protocols require synchrony to achieve their guarantees~\cite{lewis-pyePermissionlessConsensus2024}.
In practice, however, all such protocols are deployed in the asynchronous Internet.
The usual way to approximate synchrony in the Internet is to use a gossip protocol and to make that protocol as robust as possible through a variety of mechanisms, {\em e.g.}, by using routers from different networking domains or by blocking nodes that give you bogus messages.
An implementation of Sieve would also rely on such mechanisms, the details of which are outside the scope of this paper.

\textbf{Coffers as pointers.}
Sieve requires coffers, which get prohibitively expensive if they hold actual messages.
In practice, there should be a separate data dissemination layer that makes sure nodes have access to all messages sent thus far, and coffers should contain pointers to messages ({\em e.g.}, message hashes).
The aforementioned gossip protocol can take care of this as well.

\section{Correctness}\label{sec: correctness}

We have specified TTRB as two propositions about the coffers of the messages created by correct nodes.
Correct nodes pick the coffers as the output of one application of either Bootstrap-Sieve or Online-Sieve.
We show that both Bootstrap-Sieve and Online-Sieve maintain an inductive invariant that implies TTRB, hence implying that TTRB is an invariant of Sieve.
We first express the Sieve invariant.

Recall, a timestamp-$s$ message declares~$s$ as its step in its payload, and a message is generated at the step in which the node generating it called~$\textsc{dpow}$ to get a \dpow evaluation.

\begin{definition}[Sieve Invariant]
    For every step~$s\geq 1$ and every correct node~$n$, when~$n$ calls~$\textsc{TTRBDeliver}(s, \mathcal{L})$, the following property~$SI(n, s)$ holds:
    \begin{enumerate}[label=SI\arabic*, ref=SI\arabic*]
        \item\label{SI1} Every~$m\in\mathcal{L}$ is generated at~$s-1$, and
        \item\label{SI2} all correct timestamp-$(s-1)$ messages are in~$\mathcal{L}$.
    \end{enumerate}
\end{definition}

\begin{lemma}\label{lem:SI-implies-TTRB}
    The Sieve Invariant and the Sieve algorithm together imply TTRB.
\end{lemma}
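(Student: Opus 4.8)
The plan is to unfold both definitions and establish a direct correspondence between the Sieve messages in a delivered set $\mathcal{L}$ and the TTRB tuples in the abstract set $L$. Fix a correct node $n$ and a step $s \geq 1$ at which $n$ calls $\textsc{TTRBDeliver}(s, \mathcal{L})$, and assume the invariant $SI(n,s)$ holds. Every element of $\mathcal{L}$ is a Sieve message of the form $\langle \textit{ttrb-msg}, s-1, \mathcal{C}, dpow, w\rangle$; I will map each such message to the TTRB tuple $\langle \textit{ttrb-msg}, dpow, w\rangle$, so that the payload is $\textit{ttrb-msg}$, the DPoW evaluation is $dpow$, and the weight is $w$. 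Establishing TTRB for this $L$ then reduces to verifying the two clauses \ref{TTRB1} and \ref{TTRB2} under this correspondence.

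For \ref{TTRB1}, I would combine property \ref{SI1} with the validity checks built into the filtering subroutines. Both Online-Sieve and Bootstrap-Sieve retain only messages whose \dpow passes $\textsc{verify}$, so every $dpow$ carried by a message in $\mathcal{L}$ is a correct \dpow evaluation on the tuple $\gamma = \langle \textit{ttrb-msg}, \mathcal{C}, r\rangle$ with weight $w$; under the correspondence this yields that $v = dpow$ is a correct \dpow evaluation of the delivered payload with weight $w$. Property \ref{SI1} states that every $m \in \mathcal{L}$ is generated at $s-1$, which by the definition of ``generated at a step'' means precisely that the oracle generation time recorded for this \dpow is $s-1$. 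Together these give exactly \ref{TTRB1}.

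For \ref{TTRB2}, I would trace the Sieve algorithm for a correct node $n'$ active at step $s-1$. Unfolding \textsc{NewStep} and \textsc{TTRBCast}, when $n'$ returns a payload $m$ to be \textsc{TTRBcast} in step $s-1$, it calls $\textsc{dpow}(\gamma, \mathcal{P}(n'))$ with $\gamma = \langle m, \mathcal{L}', r\rangle$ at the first tick of step $s-1$ (so the oracle records generation time $s-1$) and, at the last tick of the step, broadcasts the Sieve message $\langle m, s-1, \mathcal{L}', dpow, \mathcal{P}(n')\rangle$. This is a correct timestamp-$(s-1)$ message of weight $\mathcal{P}(n')$ carrying payload $m$. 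Property \ref{SI2} guarantees that every correct timestamp-$(s-1)$ message lies in $\mathcal{L}$, so this message is in $\mathcal{L}$, and its corresponding tuple $\langle m, dpow, \mathcal{P}(n')\rangle$ is in $L$; hence $m$ appears in $L$ with weight $\mathcal{P}(n')$, which is \ref{TTRB2}.

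The bookkeeping is routine; the main subtlety I anticipate is reconciling the phrase ``correct \dpow evaluation on $m$'' in \ref{TTRB1} with the fact that in Sieve the \dpow is actually computed on the enlarged tuple $\gamma = \langle \textit{ttrb-msg}, \mathcal{C}, r\rangle$ rather than on the bare payload. I would handle this by fixing the correspondence above once and for all and interpreting ``on $m$'' through it, i.e., a \dpow is correct on the delivered payload exactly when it is a correct oracle evaluation on the message's full $\gamma$ with the claimed weight. A second point worth checking is that a correct node active for all of step $s-1$ indeed receives its \dpow response and completes the broadcast in time so that its message qualifies as a correct timestamp-$(s-1)$ message; this follows from the weight being set to $\mathcal{P}(n')$ and from correct nodes remaining active for entire steps, and I would cite the \dpow oracle's delivery schedule rather than re-derive the tick count.
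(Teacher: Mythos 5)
Your proof is correct and follows essentially the same route as the paper's: \ref{TTRB1} from the \dpow validity checks in the filtering subroutines combined with \ref{SI1}, and \ref{TTRB2} from the fact that a correct node's \textsc{TTRBcast} in step $s-1$ yields a correct timestamp-$(s-1)$ message of weight $\mathcal{P}(n')$ combined with \ref{SI2}. The paper's version is just far terser; your explicit handling of the payload-versus-$\gamma$ mismatch and of the \dpow delivery timing fills in details the paper glosses over, but introduces no new ideas.
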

\begin{proof}
    Sieve verifies the \dpow evaluations it receives at each step, which, together with Property~\ref{SI1} of the Sieve Invariant, implies Property~\ref{TTRB1} of TTRB.
    Each correct message $m$ \textsc{TTRBcast} at step $s-1$ declares $s-1$ as its timestamp, and is therefore a timestamp-$(s-1)$ message.
    Together with Property~\ref{SI2} of the Sieve Invariant, this implies Property~\ref{TTRB2} of TTRB.
\end{proof}

We now proceed to prove that what we call the Sieve Invariant (henceforth, SI) holds.
We do so by proving that both Bootstrap-Sieve and Online-Sieve preserve SI inductively: 
($i$) the base case~$SI(n, 1)$ for any correct node~$n$ holds because at step 1 neither Online-Sieve nor Bootstrap-Sieve discard timestamp-0 messages, and 
($ii$) for any~$s\geq 1$ and any correct node~$n$, we assume that~$SI(n', s')$ holds for all~$1\leq s'\leq s-1$ and any correct node~$n'$, and we prove that~$SI(n, s)$ holds.

We start with the simpler case: Online-Sieve.
For a set of messages~$X$, let~$X_s$ be the set of all timestamp-$s$ messages in~$X$.
Let~$C_s$ be the set of all correct timestamp-$s$ messages for all~$s\geq 0$.

\begin{lemma}\label{lem:invariant-implies-supremacy}
    Fix an arbitrary step~$s'$ and an arbitrary correct node~$n$, and let~$X$ be the set of messages received at the start of~$s'$ by~$n$.
    If \,$C_{s'-1}\subseteq X$ and all messages in $X$ are generated at~$s'-1$, then $\texttt{weight}(C_{s'-1}) > (1-\rho)\cdot\texttt{weight}(X)$. 
\end{lemma}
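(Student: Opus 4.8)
The plan is to prove the bound by a weight-counting argument that specializes the correct supremacy assumption to the single step~$s'-1$. I would first split~$X$ into its correct and Byzantine parts, and then show that the Byzantine weight is small relative to the total weight generated at step~$s'-1$.

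First I would write~$X = C_{s'-1}\cup B$ with~$B := X\setminus C_{s'-1}$, which is legitimate since~$C_{s'-1}\subseteq X$ by hypothesis. I claim that~$B$ contains only Byzantine messages. Indeed, every message in~$X$ is generated at~$s'-1$; a correct node that generates a message at step~$s'-1$ calls~$\textsc{dpow}$ at the first tick of that step and broadcasts~$\langle \textit{ttrb-msg}, \lfloor t/K\rfloor, \mathcal{L}, dpow, \mathcal{P}(n)\rangle$ with~$\lfloor t/K\rfloor = s'-1$, so it is a correct timestamp-$(s'-1)$ message and already lies in~$C_{s'-1}$. Hence every~$m\in B$ is Byzantine, and~$\texttt{weight}(X) = \texttt{weight}(C_{s'-1}) + \texttt{weight}(B)$.

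Next I would translate message weights into \dpow weights. Let~$\Sigma_C$,~$\Sigma_B$, and~$\Sigma = \Sigma_C + \Sigma_B$ denote the total weights of the correct, Byzantine, and all \dpow evaluations belonging to step~$s'-1$. Since a correct node active at~$s'-1$ contributes exactly one weight-$\mathcal{P}(n)$ message whose \dpow belongs to~$s'-1$, and these are precisely the elements of~$C_{s'-1}$, we get~$\texttt{weight}(C_{s'-1}) = \Sigma_C$. For the Byzantine part, each~$m\in B$ is generated at~$s'-1$, so it carries a valid \dpow whose generation time is~$s'-1$; because the messages of~$X$ are already available at the first tick of~$s'$, these evaluations were delivered no later than~$s'-1$ and thus belong to step~$s'-1$. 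Validity of the \dpow forces the claimed weight of~$m$ to equal the weight of its \dpow, and distinct messages of~$B$ map to distinct \dpow evaluations, so~$\texttt{weight}(B)\le \Sigma_B$. In particular~$\texttt{weight}(X) = \Sigma_C + \texttt{weight}(B) \le \Sigma_C + \Sigma_B = \Sigma$.

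Finally, I would invoke correct supremacy on the interval~$[s'-1,s'-1]$, giving~$\Sigma_B < \rho\,\Sigma$, and conclude
\[
\texttt{weight}(C_{s'-1}) = \Sigma_C = \Sigma - \Sigma_B > (1-\rho)\,\Sigma \ge (1-\rho)\,\texttt{weight}(X),
\]
using~$\Sigma \ge \texttt{weight}(X)$ and~$1-\rho\ge 0$ in the last step. I expect the main obstacle to be the weight accounting of the third paragraph rather than the arithmetic: correct supremacy bounds \dpow weights per step, whereas~$\texttt{weight}(\cdot)$ sums the weights \emph{claimed} by messages, so I must rule out both weight inflation (a message advertising a larger weight than the one with which its \dpow was computed) and double counting (two distinct messages of~$B$ sharing one \dpow evaluation). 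The former is closed off by \dpow validity; the latter needs distinct messages of~$B$ to carry distinct \dpow evaluations, which requires care because a \dpow does not commit to the timestamp field, and holds because the messages of~$X$ to which this lemma is applied share a common timestamp, leaving only \dpow-bound fields to distinguish them.
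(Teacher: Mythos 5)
Your proof is correct and takes essentially the same route as the paper's: since every message in $X$ is generated at~$s'-1$ and has already been received at the start of~$s'$, each \dpow it carries belongs to the single-step interval~$[s'-1,s'-1]$, and the correct supremacy assumption applied to that interval yields the bound. The paper's own proof is a terse version of yours; your extra bookkeeping (the correct part of~$X$ is exactly~$C_{s'-1}$, no weight inflation, no double-counting of \dpow evaluations) just makes explicit what the paper leaves implicit.
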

\begin{proof}
    If all messages in~$X$ are generated at~$s'-1$, then they all called the \dpow oracle at step~$s'-1$.
    Moreover, because all messages in~$X$ have been received, their \dpow evaluation also ended in step~$s'-1$.
    Therefore, all messages in~$X$ belong to step~$s'-1$.
    Thus, based on the correct supremacy assumption, correct timestamp-$(s'-1)$ messages are strictly more than a weighted fraction~$1-\rho$ of~$X$.
    We should thus have~$\texttt{weight}(C_{s'-1}) > (1-\rho)\cdot\texttt{weight}(X)$.
\end{proof}

\begin{lemma}\label{lem:online-sieve-preserves-invariant}
    For every correct node $n$ and any step $s\geq 1$, if $SI(n', s')$ holds for all $n'$ and all $s' < s$, then $SI(n, s)$ holds after $n$ executes Online-Sieve at $s$.
\end{lemma}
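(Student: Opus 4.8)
The plan is to prove \ref{SI1} and \ref{SI2} separately, using the inductive hypothesis to control the input set~$\mathcal{L}$ and then reading off the behavior of the filter in Line~\ref{alg:opt-ls-rm} of \Cref{alg:online-sieve}. First I would pin down the structure of the input~$\mathcal{L}$. Online-Sieve is invoked only when $\textit{last-active}=s-1$, so $n$ was active at step $s-1$ and called $\textsc{TTRBDeliver}(s-1,\mathcal{L})$ with exactly this $\mathcal{L}$; by the inductive hypothesis $SI(n,s-1)$ holds, giving (via \ref{SI1} at $s-1$) that every message of $\mathcal{L}$ is generated at $s-2$ and (via \ref{SI2} at $s-1$) that $C_{s-2}\subseteq\mathcal{L}$. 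Applying \Cref{lem:invariant-implies-supremacy} with $X=\mathcal{L}$ and $s'=s-1$ then yields the crucial supermajority bound $\texttt{weight}(C_{s-2})>(1-\rho)\cdot\texttt{weight}(\mathcal{L})$. The base case $s=1$ is handled apart: there $\mathcal{L}=\emptyset$, no filtering occurs, and since no timestamp-$0$ message can be antique, both \ref{SI1} and \ref{SI2} hold trivially.

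For \ref{SI2}, let $m$ be a correct timestamp-$(s-1)$ message, generated by some correct node $n'$ active at $s-1$. Its coffer is precisely the set $n'$ computed at step $s-1$, so $SI(n',s-1)$ gives $C_{s-2}\subseteq\operatorname{coffer}(m)$; combined with $C_{s-2}\subseteq\mathcal{L}$ we obtain $C_{s-2}\subseteq\operatorname{coffer}(m)\cap\mathcal{L}$, whence $\texttt{weight}(\operatorname{coffer}(m)\cap\mathcal{L})\ge\texttt{weight}(C_{s-2})>(1-\rho)\cdot\texttt{weight}(\mathcal{L})$, so $m$ passes the filter. Since $m$ is correct it carries a valid \dpow and, by synchrony, is received by $n$ at step $s$; thus $m\in\mathcal{V}_{s-1}$ and is retained.

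For \ref{SI1}, let $m$ be any message returned by Online-Sieve and suppose toward a contradiction that $m$ is antique, i.e.\ generated at some step $s_{\mathrm{gen}}\le s-2$. The decisive observation is a timing argument: the \dpow input of $m$---which fixes $\operatorname{coffer}(m)$---was committed at the generation tick in step $s_{\mathrm{gen}}$, whereas every correct timestamp-$(s-2)$ message is broadcast only in the last tick of step $s-2$ and received no earlier than step $s-1$. By the Dolev-Yao assumption the adversary cannot place into $\operatorname{coffer}(m)$ a message it has not yet seen, so $\operatorname{coffer}(m)\cap C_{s-2}=\emptyset$. Therefore $\operatorname{coffer}(m)\cap\mathcal{L}\subseteq\mathcal{L}\setminus C_{s-2}$, and using the supermajority bound, $\texttt{weight}(\operatorname{coffer}(m)\cap\mathcal{L})\le\texttt{weight}(\mathcal{L})-\texttt{weight}(C_{s-2})<\rho\cdot\texttt{weight}(\mathcal{L})\le(1-\rho)\cdot\texttt{weight}(\mathcal{L})$, where the last inequality uses $\rho\le 1/2$. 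This contradicts the filter condition that $m$ satisfied, so $m$ cannot be antique and is generated at $s-1$.

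I expect the main obstacle to be the \ref{SI1} timing argument: making rigorous that an antique timestamp-$(s-1)$ message, whether generated strictly before or exactly at step $s-2$, necessarily commits its coffer before any correct timestamp-$(s-2)$ message becomes available, and that this forces disjointness from $C_{s-2}$ via Dolev-Yao freshness and message uniqueness. The remaining care is bookkeeping: tracking strict versus non-strict inequalities so that the $\rho\le 1/2$ step closes the contradiction, and confirming that the $\mathcal{L}$ fed to Online-Sieve is literally the state variable delivered at step $s-1$, so that the inductive hypothesis applies verbatim.
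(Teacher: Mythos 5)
Your proof is correct and follows essentially the same route as the paper's: establish the supermajority bound $\texttt{weight}(C_{s-2})>(1-\rho)\cdot\texttt{weight}(\mathcal{L})$ via \Cref{lem:invariant-implies-supremacy}, prove \ref{SI2} by showing $C_{s-2}\subseteq\operatorname{coffer}(m)\cap\mathcal{L}$ for correct $m$, and prove \ref{SI1} by the timing/Dolev-Yao argument that an antique message's coffer is disjoint from $C_{s-2}$, so it fails the filter when $\rho\leq 1/2$. The only differences are cosmetic (contradiction framing for \ref{SI1}, routing the final inequality through $\rho$ rather than $1/2$, and explicitly treating the $s=1$ base case, which the paper defers to its main theorem).
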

\begin{proof}
    Consider a correct timestamp-$(s-1)$ message~$m$ and suppose~$n$ is executing the call~$\textsc{OnlineSieve}(s, \mathcal{M}, \mathcal{L})$.
    Note that~$m\in\mathcal{M}$, since~$m$ was produced by a correct node~$n'$ and correct nodes broadcast their messages.
    Since correct nodes set the coffer of their message to be equal to their~$\mathcal{L}$ variable after they have executed Sieve, the set~$\operatorname{coffer}(m)$ satisfies properties~\ref{SI1} and~\ref{SI2} of SI, based on~$SI(n', s-1)$.
    The set~$\mathcal{L}$ also satisfies those properties based on~$SI(n, s-1)$.
    We thus have~$C_{s-2}\subseteq\operatorname{coffer}(m)\cap\mathcal{L}$, which also implies~$\texttt{weight}(C_{s-2})\leq\texttt{weight}(\operatorname{coffer}(m)\cap\mathcal{L})$.
    Moreover, based on Lemma~\ref{lem:invariant-implies-supremacy}, we have~$\texttt{weight}(C_{s-2}) > (1-\rho)\cdot\texttt{weight}(\mathcal{L})$.
    We thus have~$\texttt{weight}(\operatorname{coffer}(m)\cap\mathcal{L}) > (1-\rho)\cdot\texttt{weight}(\mathcal{L})$, which means~$n$ will not discard~$m$.
    We conclude that property~\ref{SI2} of~$SI(n, s)$ holds.

    Now consider any timestamp-$(s-1)$ message~$m'\in\mathcal{M}$ created by some Byzantine node~$n''$, where~$m'$ belongs to a step earlier than~$s-1$, {\em i.e.}, an antique message.
    Since messages in~$C_{s-2}$ were sent only after the end of step~$s-2$, we have~$C_{s-2}\cap\operatorname{coffer}(m') = \emptyset$, because~$n''$ could not have received them.
    Once again, based on Lemma \ref{lem:invariant-implies-supremacy}, we have~$\texttt{weight}(C_{s-2}) > (1-\rho)\cdot\texttt{weight}(\mathcal{L})$, which implies~$\texttt{weight}(C_{s-2}) > \frac{1}{2}\cdot\texttt{weight}(\mathcal{L})$ since~$\rho\leq 1/2$.
    Together, these imply that~$\texttt{weight}(\operatorname{coffer}(m')\cap\mathcal{L})<\frac{1}{2}\cdot\texttt{weight}(\mathcal{L})\leq(1 - \rho)\cdot\texttt{weight}(\mathcal{L})$.
    Therefore,~$m'$ will be discarded, which implies that any message remaining in~$\mathcal{L}$ by the end of the call is generated at~$s-1$, proving property~\ref{SI1} of~$SI(n, s)$.
\end{proof}

Before proceeding to Bootstrap-Sieve, we prove another useful lemma.
For every set of messages~$X$ and every step~$s$, let~$X_{s}^+$ be the subset of~$X$ consisting of all the messages in~$X$ with a timestamp at least~$s$.

\begin{lemma}\label{lem:disjoint-remains-disjoint}
Consider two sets of timestamp-$s$ messages~$S_{s}^1$ and~$S_{s}^2$ and assume that~$X^1$ and~$X^2$ are two DAGs consistent with~$S_{s}^1$ and~$S_{s}^2$, respectively.
Then~$S_{s}^1 \cap S_{s}^2 =\emptyset$ implies $X^1\cap X^2 =\emptyset$.
\end{lemma}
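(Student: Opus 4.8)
The plan is to prove the statement by induction on the ``depth'' of the two consistent DAGs, letting the weighted supermajority condition in the definition of consistent successor do the heavy lifting. Fix decompositions $X^1=\bigcup_{k\ge 0}X^1_{s+k}$ and $X^2=\bigcup_{k\ge 0}X^2_{s+k}$ witnessing, per \Cref{def:consistent-dag}, that $X^1$ and $X^2$ are consistent DAGs with seeds $X^1_s=S_s^1$ and $X^2_s=S_s^2$. I would then prove, by induction on $k\ge 0$, that the level-$(s{+}k)$ slices are disjoint, i.e. $X^1_{s+k}\cap X^2_{s+k}=\emptyset$. The base case $k=0$ is immediate, since the level-$s$ slices are exactly the seeds $S_s^1$ and $S_s^2$, which are disjoint by hypothesis.

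For the inductive step, assume $X^1_{s+k}\cap X^2_{s+k}=\emptyset$ and suppose toward a contradiction that some message $m$ lies in $X^1_{s+k+1}\cap X^2_{s+k+1}$. By \Cref{def:consistent-succ}, $m$ is a consistent successor of both $X^1_{s+k}$ and $X^2_{s+k}$; hence both are subsets of $\operatorname{coffer}(m)$, and each accounts for strictly more than a weighted fraction $1-\rho$ of it:
\[
\texttt{weight}(X^1_{s+k}) > (1-\rho)\cdot\texttt{weight}(\operatorname{coffer}(m)), \qquad \texttt{weight}(X^2_{s+k}) > (1-\rho)\cdot\texttt{weight}(\operatorname{coffer}(m)).
\]
By the inductive hypothesis these two subsets of $\operatorname{coffer}(m)$ are disjoint, so their weights add up inside $\operatorname{coffer}(m)$, giving $\texttt{weight}(X^1_{s+k})+\texttt{weight}(X^2_{s+k})=\texttt{weight}(X^1_{s+k}\cup X^2_{s+k})\le \texttt{weight}(\operatorname{coffer}(m))$. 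Summing the two strict inequalities and using the standing assumption $\rho\le 1/2$ (so that $2(1-\rho)\ge 1$) yields $\texttt{weight}(\operatorname{coffer}(m)) < \texttt{weight}(X^1_{s+k})+\texttt{weight}(X^2_{s+k}) \le \texttt{weight}(\operatorname{coffer}(m))$, a contradiction. Thus $X^1_{s+k+1}\cap X^2_{s+k+1}=\emptyset$, closing the induction. This weighted-overflow argument---two disjoint supermajorities cannot coexist in a single coffer once $\rho\le 1/2$---is the conceptual core, and it mirrors the quorum-intersection reasoning used throughout \Cref{sec: correctness}.

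To assemble the conclusion $X^1\cap X^2=\emptyset$ from the levelwise disjointness, the one remaining subtlety is to rule out a single message occupying \emph{different} levels in the two decompositions, i.e. $m\in X^1_{s+i}\cap X^2_{s+j}$ with $i\ne j$, which the levelwise induction alone does not preclude. I expect this alignment to be the main obstacle. My plan for it is to observe that generation time is monotone along consistent-successor edges---every member of a slice is contained in the coffers of the next slice and hence was registered by the \dpow oracle strictly earlier---so that a message's level within a consistent DAG is pinned down by where it sits relative to the seed; since both decompositions share the same seed index $s$, this monotonicity aligns the two level assignments and reduces any cross-level coincidence to the levelwise case already handled. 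The weighted-overflow step is robust, so the proof's delicate point is purely this bookkeeping about how decomposition levels track generation order.
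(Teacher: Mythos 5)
Your first two paragraphs are exactly the paper's proof: the paper performs the same levelwise induction, with base case the disjointness of the seeds, and its inductive step is your weighted-overflow argument stated contrapositively---since $\rho\le 1/2$, two subsets of $\operatorname{coffer}(m)$ each of weight strictly more than $(1-\rho)\cdot\texttt{weight}(\operatorname{coffer}(m))$ must intersect, contradicting the disjointness of the previous slices.

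Where you diverge is the third paragraph. The paper simply concludes that the lemma ``follows by induction,'' with no treatment of your cross-level case, and the reason it can do so is notational rather than anything about generation times: in this paper, for a set of messages $X$, the subscripted set $X_{s'}$ denotes the set of timestamp-$s'$ messages in $X$ (\S\ref{sec:sieve-main}), so the slices of a consistent DAG are indexed by an attribute intrinsic to each message---its declared timestamp---rather than by a position in some chosen decomposition. A message common to $X^1$ and $X^2$ therefore lies in the slice of the same index in both DAGs, and levelwise disjointness is already the full conclusion. This disambiguation matters, because your proposed fix via generation-time monotonicity would not close the gap under the reading you adopt (slices as arbitrary decomposition levels): the two seeds are different sets of messages whose generation times need not coincide, so a message's generation time does not pin down its distance from the seed. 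Indeed, under that loose reading the statement is false as literally written---a timestamp-$s$ message whose coffer consists of other timestamp-$s$ messages can sit at level $0$ of one DAG and at level $1$ of another, intersecting the two DAGs while their seeds remain disjoint. So your core argument is the paper's and is correct; the alignment worry you raise is real only as an ambiguity in \Cref{def:consistent-dag}, and it is resolved by reading slices as timestamp classes, not by the monotonicity bookkeeping you sketch.
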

\begin{proof}
    Since~$X^1$ and~$X^2$ are DAGs consistent with~$S_{s}^1$ and~$S_{s}^2$, respectively, we thus have~$X_{s}^1 = S_{s}^1$ and $X_{s}^2 = S_{s}^2$.
    We show that, for every natural number~$s'\geq s$, if~$X_{s'}^1 \cap X_{s'}^2 =\emptyset$ then~$X_{s'+1}^1 \cap X_{s'+1}^2 =\emptyset$.
    The lemma then follows by induction.

    Consider~$s'\geq s$ such that~$X_{s'}^1 \cap X_{s'}^2 =\emptyset$ and assume toward a contradiction that there is some~$m$ such that~$m\in X_{s'+1}^1 \cap X_{s'+1}^2$.
    Because~$X^{1}$ is a consistent DAG,~$X_{s'}^{1}$ is strictly more than a weighted fraction~$1-\rho$ of~$m$'s coffer.
    Similarly,~$X_{s'}^{2}$ is strictly more than a weighted fraction~$1-\rho$ of~$m$'s coffer.
    Since~$\rho\leq 1/2$, there must be a message~$m'$ such that~$m'\in X_{s'}^{1}\cap X_{s'}^{2}$.
    This contradicts our assumption that~$X_{s'}^1 \cap X_{s'}^2 =\emptyset$.
    We thus have~$X_{s'+1}^1 \cap X_{s'+1}^2 = \emptyset$.
\end{proof}

We are now ready to tackle Bootstrap-Sieve.

\begin{lemma}\label{lem:Bootstrap-sieve-preserves-invariant}
    For every correct node $n$ and any step $s\geq 1$, if $SI(n', s')$ holds for all $n'$ and all $s' < s$, then $SI(n, s)$ holds after $n$ executes Bootstrap-Sieve at $s$.
\end{lemma}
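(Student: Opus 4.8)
The plan is to show that the set $\mathcal{L} = \tilde{\mathcal{L}}_{s-1}$ returned by Bootstrap-Sieve satisfies \ref{SI1} and \ref{SI2}, by establishing a loop invariant over the iterations $s' = 1, \dots, s-1$ of the Removal Phase. The invariant states that, at the end of iteration $s'$: (a) every correct message still belongs to $\tilde{\mathcal{L}}$, and (b) every timestamp-$s''$ message remaining in $\tilde{\mathcal{L}}$ with $s'' \le s'$ is generated at step $s''$, i.e.\ is non-antique. Taking $s' = s-1$ then yields \ref{SI2} from (a) and \ref{SI1} from (b). The base case holds before iteration $1$: $\operatorname{VerifyDPoWsRecursively}$ retains all correct messages (their \dpow{}s, and all \dpow{}s reachable from them, are valid), and timestamp-$0$ messages cannot be antique. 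Each iteration removes only timestamp-$s'$ messages, so the parts of the invariant concerning timestamps below $s'$ are preserved automatically; the work is entirely at level $s'$.

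The technical engine of the inductive step is a \emph{generation-monotonicity} claim: at the start of iteration $s'$, every message appearing in any timestamp-$(s'-1)$ consistent DAG contained in $\tilde{\mathcal{L}}$ is generated in the interval $[s'-1, s-1]$. I would prove this by inducting up the levels of the DAG. The seed lies in $\tilde{\mathcal{L}}_{s'-1}$, which by invariant (b) at $s'-1$ consists only of messages generated at $s'-1$; and whenever a message $m'$ lists a message $c$ in its coffer, the Dolev-Yao assumption forces $c$'s \dpow to exist before $m'$'s coffer is fixed, so $m'$'s generation time is at least $c$'s. Since every level-$s''$ vertex is a consistent successor of the level below, its generation time is therefore at least $s'-1$, and all messages received by step $s$ are generated by step $s-1$. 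Combining this with \Cref{lem:disjoint-remains-disjoint} and correct supremacy on $[s'-1, s-1]$ gives the weight comparison I need: writing $\Sigma$, $\Sigma_B$, and $W_{cor} = \Sigma - \Sigma_B$ for the total, Byzantine, and correct \dpow weights generated in that interval, I have $\texttt{weight}(C_{s'-1}^+) = W_{cor} > (1-\rho)\Sigma \ge \rho\Sigma > \Sigma_B$ (using $\rho \le 1/2$), while any consistent DAG whose seed is disjoint from $C_{s'-1}$ is, by \Cref{lem:disjoint-remains-disjoint}, disjoint from the all-correct DAG $C_{s'-1}^+$ and hence all-Byzantine with weight at most $\Sigma_B < W_{cor}$. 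Here $C_{s'-1}^+$ is a consistent DAG because the premise $SI(n',s')$ for $s'<s$ yields TTRB for those steps through \Cref{lem:SI-implies-TTRB}.

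With this in hand the two halves of the step proceed as follows. For \emph{survival} of a correct timestamp-$s'$ message $m$, note $C_{s'-1}^+ \subseteq \tilde{\mathcal{L}}$ is a timestamp-$(s'-1)$ consistent DAG containing $m$, so the first removal branch cannot fire and the algorithm's maximal-weight choice $C$ satisfies $\texttt{weight}(C) \ge \texttt{weight}(C_{s'-1}^+) = W_{cor}$. For any seed $B$ disjoint from $C$'s seed $A$ and any consistent DAG $D$ on $B$, \Cref{lem:disjoint-remains-disjoint} gives $C \cap D = \emptyset$; since both live among messages generated in $[s'-1,s-1]$ of total weight $\Sigma$, we get $\texttt{weight}(D) \le \Sigma - \texttt{weight}(C) \le \Sigma - W_{cor} = \Sigma_B < W_{cor} \le \texttt{weight}(C)$, so the second branch never fires and $m$ is kept. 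For \emph{removal} of an antique timestamp-$s'$ message $m$, its generation precedes the sending of any correct timestamp-$(s'-1)$ message, so $\operatorname{coffer}(m)$ contains no member of $C_{s'-1}$; hence the seed $A$ of any consistent DAG containing $m$, being a subset of $\operatorname{coffer}(m)$, is disjoint from $C_{s'-1}$. Taking $B = C_{s'-1}$, the DAG $C_{s'-1}^+$ is disjoint from $A$ and strictly heavier than the all-Byzantine $C$ by the inequality above, so the second branch fires and $m$ is discarded (and if no consistent DAG contains $m$ at all, the first branch already discards it).

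Two points need care. First, the invariant and the arguments above refer to $\tilde{\mathcal{L}}$ \emph{as it evolves} within iteration $s'$; since that iteration removes only timestamp-$s'$ messages, $\tilde{\mathcal{L}}_{s'-1}$ is frozen during it, and an inner induction on the processing order shows that no correct timestamp-$s'$ message is removed before any scrutinized message, so $C_{s'-1}^+ \subseteq \tilde{\mathcal{L}}$ remains true at every step (the antique removals performed earlier touch no correct message). Second, I must check that the correct \dpow weight generated in $[s'-1,s-1]$ equals $\texttt{weight}(C_{s'-1}^+)$, which holds because correct messages are non-antique and each correct active node contributes exactly one \dpow per step. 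I expect the generation-monotonicity claim, and threading it correctly through both the outer loop invariant and the within-iteration ordering, to be the main obstacle; once disjointness and a common generation interval are secured, the weight inequalities are an immediate consequence of correct supremacy.
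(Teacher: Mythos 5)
Your overall strategy coincides with the paper's: an induction over the Removal-Phase iterations with a two-part loop invariant (correct messages are never removed; surviving messages have legitimate generation times), the correct DAG $C_{s'-1}^+$ serving as the heavy timestamp-$(s'-1)$ consistent DAG, \Cref{lem:disjoint-remains-disjoint} to turn seed-disjointness into DAG-disjointness, and correct supremacy for the final weight comparison. Your survival and removal arguments match the paper's, and your explicit treatment of the within-iteration processing order and of $\operatorname{VerifyDPoWsRecursively}$ is if anything more careful than the paper's.

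There is, however, one genuine flaw: part (b) of your invariant asserts that every timestamp-$s''$ message surviving iteration $s''$ is generated \emph{exactly} at step $s''$, and this is false at intermediate levels $s''<s-1$, so the induction cannot be closed as you state it. Consider a Byzantine node that waits until some step $r$ with $s''<r\leq s-1$---by which time it has legitimately received all correct timestamp-$(s''-1)$ messages---and only then calls $\textsc{dpow}$ on a message $m$ with timestamp $s''$ and coffer exactly $C_{s''-1}$. Then $m$ is a consistent successor of $C_{s''-1}$, so a timestamp-$(s''-1)$ consistent DAG containing $m$ exists and its seed must be $C_{s''-1}$ itself (the seed is constrained to lie inside $\operatorname{coffer}(m)$); if the adversary mints nothing else, every seed $B$ disjoint from it is empty and supports only the empty DAG, so neither removal branch fires and $m$ survives iteration $s''$ despite being generated at $r>s''$. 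Your removal argument handles only \emph{antique} messages (generated before their timestamp) and says nothing about these late-generated ones, so the inductive step for (b) fails---the invariant is not merely unproven but false. The paper avoids this by stating its second invariant as ``every $m\in\mathcal{L}_{s'}$ is generated at a step \emph{greater than or equal to} $s'$.'' That weaker bound is all your downstream arguments actually use: your generation-monotonicity claim and the inequality $\texttt{weight}(C)\leq\Sigma_B<W_{cor}$ need only the lower bound, and exactness at the single level where \ref{SI1} requires it, namely $s''=s-1$, is recovered from a fact you already invoke, that any message received by step $s$ was generated by step $s-1$. So the repair is a one-line weakening of (b), but as written your proof would get stuck.
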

\begin{proof}
    Suppose that~$n$ is executing the call~$\textsc{BootstrapSieve}(s, \mathcal{M})$, which proceeds in iterations.
    There is an elegant connection between these iterations and SI: the iterations preserve SI within the history maintained by~$n$ during the call execution.
    In other words, for every~$1\leq s'\leq s-1$, the following hold at the end of iteration~$s'$:
    \begin{enumerate}[label=IH\arabic*$(s')$, ref=IH\arabic*$(s')$, left=0pt]
        \item\label{IH1} All correct timestamp-$s'$ messages are in~$\mathcal{L}$, and
        \item\label{IH2} every~$m\in\mathcal{L}_{s'}$\,is generated at a step greater than or equal to~$s'$.
    \end{enumerate}
    Note that IH$(s-1)$ implies $SI(n, s)$.
    It thus suffices to prove properties~\ref{IH1} and~\ref{IH2} using induction on the iteration~$s'$.

    \noindent\textbf{Base case.} Correct nodes broadcast their timestamp-$0$ messages, so they are all in~$\mathcal{M}$.
    Bootstrap-Sieve does not discard any timestamp-$0$ messages, therefore, IH1$(0)$ holds.
    IH2$(0)$ holds as messages cannot have a negative generation time.
    
    \noindent\textbf{Induction Hypothesis.} IH1$(s'')$ and IH2$(s'')$ hold for all~$1\leq s''\leq s'-1$.

    \noindent\textbf{Induction step.} We have to prove that \ref{IH1} and \ref{IH2} hold.
    We first prove that~$C_{s'-1}^+$ is a DAG consistent with~$C_{s'-1}$ within~$\mathcal{L}$.

    Based on IH1$(s'-1)$, we have~$C_{s'-1}\subseteq\mathcal{L}$.
    Since Bootstrap-Sieve discards timestamp-$r$ messages from~$\mathcal{L}$ only at iteration~$r$, we also have~$C_r\subseteq\mathcal{L}$ for~$r\geq s'$.
    Now, for all~$r\geq s'-1$ and any~$m\in C_{r+1}$ created by some correct node~$n'$, according to~$SI(n', r)$ and the Sieve algorithm we have~$C_r\subseteq\operatorname{coffer}(m)$, and also every message in~$\operatorname{coffer}(m)$ is generated at~$r$.
    Applying Lemma~\ref{lem:invariant-implies-supremacy} to~$C_r$ and~$\operatorname{coffer}(m)$ implies that~$C_r$ is strictly more than a weighted fraction~$1-\rho$ of~$\operatorname{coffer}(m)$, which establishes that~$C_{s'-1}^+ = \cup_{r\geq s'-1}C_r$ is a DAG consistent with~$C_{s'-1}$ within~$\mathcal{L}$.
    
   We now prove \ref{IH1}.
    Consider any correct message~$m\in C_{s'}$, and let, for any~$r$,~$TS(r)$ be the set of all messages generated at a step greater than or equal to~$r$ in~$\mathcal{L}$.
    Let~$C'$ be one of the heaviest timestamp-$(s'-1)$ consistent DAGs containing~$m$.
    Since~$C_{s'-1}^+$ is a timestamp-$(s'-1)$ consistent DAG containing~$m$, then~$C'$ exists and we have~$\texttt{weight}(C')\geq\texttt{weight}(C_{s'-1}^+)$.
    Based on the correct supremacy assumption, we have~$\texttt{weight}(C_{s'-1}^+) > (1-\rho)\cdot\texttt{weight}(TS(s'-1)) > \frac{1}{2}\cdot\texttt{weight}(TS(s'-1))$ (note that~$\rho\leq 1/2$), which allows us to further deduce~$\texttt{weight}(C') > \frac{1}{2}\cdot\texttt{weight}(TS(s'-1))$.
    Therefore, for any~timestamp-$(s'-1)$ consistent DAG~$C''$ such that~$C''\cap C' = \emptyset$, since both~$C'\subseteq TS(s'-1)$ and~$C''\subseteq TS(s'-1)$ hold based on the induction hypothesis IH, we must have~$\texttt{weight}(C'') < \texttt{weight}(C')$.
    We conclude that Bootstrap-Sieve does not discard~$m$, which completes our proof of \ref{IH1}.

    Let us finally prove \ref{IH2}.
    Consider an antique message~$m'$ generated at a step less than~$s'$.
    Since correct timestamp-$(s'-1)$ messages were only ready at the end of step~$s'-1$, $m'$ does not contain any of them in its coffer, {\em i.e.},~$C_{s'-1}\cap\operatorname{coffer}(m')=\emptyset$.
    Let~$C'$ be any heaviest timestamp-$(s'-1)$ consistent DAG containing~$m'$.
    Based on Lemma~\ref{lem:disjoint-remains-disjoint}, we get~$C_{s'-1}^+\cap C' = \emptyset$.
    Based on IH2$(s'-1)$, every timestamp-$(s'-1)$ message in the seed of~$C'$ is generated at a step greater than or equal to~$s'-1$; therefore, all messages in~$C'$ are generated at a step greater than or equal to~$s'-1$ because they were generated after the messages in the seed, {\em i.e.},~$C'\subseteq TS(s'-1)$.
    We have already established that~$C_{s'-1}^+\subseteq TS(s'-1)$ and that~$\texttt{weight}(C_{s'-1}^+) > \frac{1}{2}\cdot\texttt{weight}(TS(s'-1))$.
    We thus have~$\texttt{weight}(C') < \texttt{weight}(C_{s'-1}^+)$, which means~$n$ will discard~$m'$.
    This concludes the proof for \ref{IH2}.    
\end{proof}

\begin{theorem}
    TTRB is an invariant of the Sieve algorithm.
\end{theorem}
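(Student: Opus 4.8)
The plan is to reduce the theorem to the claim that the Sieve Invariant $SI(n,s)$ holds for every correct node $n$ and every step $s \geq 1$, and then invoke \Cref{lem:SI-implies-TTRB}. That lemma already shows that $SI$ together with the Sieve algorithm implies TTRB, so once $SI$ is established as an invariant of every Sieve execution, TTRB follows immediately. The entire remaining burden is therefore to prove that $SI$ is an invariant, and I would carry this out by strong induction on the step~$s$, quantifying over \emph{all} correct nodes at each step simultaneously rather than tracking a single node in isolation.

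The reason the induction must range over all correct nodes at once is that the two preservation lemmas do not only reference $SI(n,\cdot)$ for the node $n$ under consideration: the Online-Sieve argument appeals to $SI(n',s-1)$ for whichever correct node $n'$ authored a coffer being scrutinized, and the Bootstrap-Sieve argument appeals to $SI(n',r)$ for several correct authors $n'$ and several earlier steps $r$. Since a correct node fixes the coffer of each message to its own $\mathcal{L}$ at the moment it broadcasts (\Cref{alg:sieve}), those coffers are immutable once sent, so each fact $SI(n',s')$ with $s'<s$ is permanent once proved; this is what makes the strong induction well-founded and non-circular. For the base case $s=1$, I would note that both Online-Sieve and Bootstrap-Sieve leave every timestamp-$0$ message untouched, so \ref{SI1} holds because every timestamp-$0$ message is generated at step $0$ (there are no antique timestamp-$0$ messages, step $0$ being the first step), and \ref{SI2} holds because every correct timestamp-$0$ message is broadcast and hence retained.

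For the inductive step, fix $s\geq 1$ and a correct node $n$, and assume $SI(n',s')$ for every correct $n'$ and every $s'<s$. At step $s$, node $n$ computes its delivered set $\mathcal{L}$ using exactly one of the two filtering policies, determined by its bookkeeping variable $\textit{last-active}$: Online-Sieve when $\textit{last-active}=s-1$, and Bootstrap-Sieve otherwise. In the first case \Cref{lem:online-sieve-preserves-invariant} yields $SI(n,s)$ directly from the inductive hypothesis, and in the second case \Cref{lem:Bootstrap-sieve-preserves-invariant} does the same; either way $SI(n,s)$ holds, closing the induction and, via \Cref{lem:SI-implies-TTRB}, establishing TTRB.

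Because all the heavy analysis has already been discharged inside the two preservation lemmas, this capstone step is essentially a bookkeeping argument. The main point deserving care—and what I would flag as the only real obstacle—is verifying that the case split is exhaustive and that the hypothesis handed to each preservation lemma matches exactly what that lemma demands, namely $SI(n',s')$ for \emph{all} correct $n'$ and \emph{all} $s'<s$. Concretely, I would confirm that a correct node's choice between the two policies is fully determined by the $\textit{last-active}$ logic in \Cref{alg:sieve}, so that every correct node at every step $\geq 1$ is covered by precisely one of the two lemmas and no execution path escapes the induction.
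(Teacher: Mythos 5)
Your proposal is correct and follows essentially the same route as the paper: establish the base case $SI(\cdot,1)$ by noting that neither filtering policy discards timestamp-$0$ messages, then close a strong induction over all correct nodes and steps by invoking \Cref{lem:online-sieve-preserves-invariant} or \Cref{lem:Bootstrap-sieve-preserves-invariant} according to the \textit{last-active} case split, and finish with \Cref{lem:SI-implies-TTRB}. Your explicit remarks on why the induction must quantify over all correct nodes simultaneously and why the coffers' immutability makes this well-founded are sound elaborations of what the paper leaves implicit.
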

\begin{proof}
    For every correct node~$n$,~$SI(n, 1)$ holds because at step 1 neither Online-Sieve nor Bootstrap-Sieve discard timestamp-0 messages, and according to the correct supremacy assumption correct timestamp-0 messages are strictly more than a weighted fraction~$1-\rho$ of messages generated at step 0.
    Based on this, an inductive application of Lemma \ref{lem:online-sieve-preserves-invariant} and Lemma \ref{lem:Bootstrap-sieve-preserves-invariant}, in addition to the fact that Sieve picks the coffer of the message it wants to send running either Online-Sieve or Bootstrap-Sieve, implies that SI is an invariant of Sieve.
    Lemma~\ref{lem:SI-implies-TTRB} concludes our proof.
\end{proof}

\section{Sieve-MMR: Fully-Permissionless Total-Order Broadcast}\label{sec: FP-TOB}

Fully-permissionless TOB is now within reach: Sieve implements TTRB, and TTRB is sufficient to enable the MMR protocol by Malkhi et al.~\cite[Appendix A]{malkhiByzantineConsensusFully2023} to operate correctly. To port MMR to a fully permissionless model, we build Sieve-MMR by layering MMR atop Sieve, which provides the essential message delivery guarantees that MMR requires.
Specifically, MMR relies on the following:

\begin{assumption}
    \label{assumption:ttrb-for-tob}
    For every correct node~$n$, for every step~$s\geq 1$, if TTRB calls~$\textsc{TTRBDeliver}(s, \mathcal{L})$ at node~$n$ at the start of step~$s$, then:
    \begin{enumerate}[label=MMR\arabic*, ref=MMR\arabic*, left=0pt]
        \item\label{MMR1} Messages TTRBcast by correct nodes in step~$s-1$ account for strictly more than a weighted fraction~$2/3$ of the total weight of the messages in~$\mathcal{L}$.
        \item\label{MMR2}Every message TTRBcast by a correct node in step~$s-1$ appears in~$\mathcal{L}$.
    \end{enumerate}
\end{assumption}
TTRB with parameter~$\rho=1/3$ provides this guarantee: Property~\ref{TTRB2} immediately implies Property~\ref{MMR2} above, and Property~\ref{TTRB1} implies the Property~\ref{MMR1} according to Lemma~\ref{lem:invariant-implies-supremacy} from Section~\ref{sec: correctness}.
As a result, Sieve-MMR inherits MMR's ability to tolerate a~$1/3$-bounded adversary\footnote{Standalone Sieve tolerates a~$1/2$-bounded adversary.}.

Building on Assumption~\ref{assumption:ttrb-for-tob}, Sieve-MMR implements TOB as specified in Section~\ref{sec: model}.
In a nutshell, nodes accept blocks \emph{submitted} by external clients and order them in a growing chain that they periodically \emph{commit} to their clients, at which point all its prefixes are considered \emph{committed}.

Sieve-MMR retains the TOB guarantees (\S\ref{sec:tob-def}):
    \begin{itemize}
        \item {\em Consistency: If two correct nodes commit chains $\Lambda_1$ and $\Lambda_2$, then $\Lambda_1$ and $\Lambda_2$ are compatible.}
        \item {\em Progress: Let $\Lambda$ be the longest chain committed by all correct nodes.
        At all times, with probability 1, $\Lambda$ eventually includes at least one more block submitted by a correct node.}
    \end{itemize}

Additionally, we care about the time it takes for a transaction to become final, \emph{i.e.}, to appear in a committed chain.
For brevity, let us refer to a block submitted by a correct node as a {\em correct block}. Then, a proxy for finality is the number of steps necessary, starting from some step~$s$, for all active correct nodes to commit a new correct block (\emph{i.e.}, the commit latency).
Sieve-MMR inherits from MMR the following commit latency guarantees:

    \begin{enumerate}[label=CL\arabic*, ref=CL\arabic*, left=0pt] {\em
        \item In the best case, all correct active nodes commit a new correct block 3 steps after it was submitted.
        \item In general, in expectation, all correct active nodes commit a new correct block~$7$ steps after it was submitted.}
    \end{enumerate}

The remainder of this section is dedicated to describing Sieve-MMR; we provide a correctness proof in the Appendix.

\subsection{The Sieve-MMR Algorithm}
\label{sec:fp-mmr}

In Sieve-MMR, correct nodes implement TTRB using Sieve and run the MMR algorithm on top of Sieve.
The MMR algorithm prescribes, at each step~$s$, which messages to TTRBCast through Sieve in response to the set of messages~$\mathcal{L}_{s-1}$ delivered by Sieve (see~\Cref{fig:sieve-mmr-stack}).
The composition of Sieve and the MMR algorithm is what we call Sieve-MMR.

For the rest of this section, when we say that a node~$n$ receives a message~$m$ in a step~$s$, we mean that~$m$ belongs to the set~$\mathcal{L}$ delivered by Sieve at~$n$ in step~$s$.
Moreover, when we say that a node sends or broadcasts a message~$m$, we mean that it calls TTRBCast$(m)$.

Sieve-MMR inherits its consensus logic almost verbatim from the MMR algorithm.
For safety, it relies solely on Assumption~\ref{assumption:ttrb-for-tob}, guaranteed by TTRB.
For liveness, Sieve-MMR relies on a probabilistic leader-election component, accessible locally at each node, that, at each step~$s$, determines a leader message among all messages received in step~$s$.
This leader-election component must guarantee that, at each step $s$, with probability strictly greater than 2/3, all active correct nodes obtain the same leader $l$ and $l$ was sent by a correct node in step~$s-1$.
While MMR implements this component using verifiable random functions~\cite{micaliVerifiableRandomFunctions1999}, Sieve-MMR relies on the assumption that the \dpow is a random oracle (\S\ref{sec:leader}).

Next, we describe the MMR algorithm (\Cref{algo:mmr}).
The algorithm is called by TTRB in each new step through a $\textsc{TTRBDeliver}$ upcall.
The MMR algorithm classifies each step as either a proposal step, if the step number is even, or a commit step, if the step number is odd.
In a proposal step, each node broadcasts a message that conveys a vote for a chain and a proposal for a (possibly longer) chain.
In a commit step, each node may commit a new chain and broadcasts a message that conveys a vote for a (possibly longer) chain.
In both types of steps, a vote for a chain $\Lambda$ also counts as a vote for all prefixes of $\Lambda$.

Before we describe the rules that nodes follow to vote for, propose, and commit chains, we need the notions of maximal chains, (maximal) grade-0 chains, and (maximal) grade-1 chains.
\begin{definition}[Grade-0 and grade-1 chains]
    We say that a chain $\Lambda$ has grade~1 at a node $n$ when, among the votes received by $n$ in the current step, the votes for extensions of $\Lambda$ account for strictly more than 2/3 of the proof-of-work weight.
    We say that a chain $\Lambda$ has grade~0 at a node $n$ when, among the votes received by $n$ in the current step, the votes for extensions of $\Lambda$ account for strictly more than 1/3 of the proof-of-work weight.
\end{definition}
Note that it follows that a chain that has grade~1 also has grade~0, but the converse is not true; moreover, the empty chain always has both grades~0 and~1.
\begin{definition}[Maximal chains]
    We say that a chain $\Lambda$ is maximal among a set of chains if no chain in the set is a strict extension of $\Lambda$ (note that two different, incompatible chains can both be maximal in the same set).
    We say that a chain $\Lambda$ is a maximal grade-1 chain (or maximal grade-0 chain) at $n$ when $\Lambda$ is maximal among the set of chains that have grade~1 (respectively, grade~0) at $n$.
\end{definition}

We are now ready to describe the algorithm in full.
In each step~$s$, each node $n$~must proceed as follows:
\begin{itemize}
    \item If~$s=0$ (this is the first step),~$n$ votes for the empty chain and proposes a chain consisting of an arbitrary submitted block (we assume each correct node always has at least one fresh submitted block available).
    \item If~$s=2k+1$ for some~$k\geq 0$, then~$s$ is a commit step and~$n$ consults the leader-election oracle, obtains a leader~$l$, and, if~$l$ carries a proposal~$\Lambda_l$ and~$\Lambda_l$ extends~$n$'s maximal grade-0 chain,~$n$ votes for~$\Lambda_l$; otherwise,~$n$ votes for the maximal grade-0 chain\footnote{We show in~\Cref{lem:unique-local-one-third} that maximal grade-0 chains are locally unique in commit steps.}.
        Moreover,~$n$ commits the maximal grade-1 chain.
    \item If~$s=2k$ for some~$k>0$, then~$s$ is a proposal step and~$n$ votes for the maximal grade-1 chain\footnote{\Cref{lem:two-thirds-compatible} implies that maximal grade-1 chains are locally unique.}.
        Moreover,~$n$ selects a submitted block not from its last committed chain, appends it to a randomly chosen maximal grade-0 chain\footnote{\Cref{lem:liveness-key} shows that there may be at most two maximal grade-0 chains.}, and proposes the resulting chain.
\end{itemize}
The Appendix includes a detailed correctness proof.

Note that the MMR algorithm is stateless: except for the set of blocks submitted by clients, the algorithm's actions only depend on the set of messages received in the current step.
Thus, each node can become active or inactive at any step without compromising MMR's properties.

\begin{algorithm}
    \caption{The MMR algorithm, code for node $n$.}%
    \label{algo:mmr}
    \begin{algorithmic}[1]
        \LeftComment{State variables:}
        \State $\mathcal{B}\gets\emptyset$ \Comment{\parbox[t]{0.7\linewidth}{Set of non-committed blocks submitted by clients (updated by upper layer)}}
        \Procedure{TTRBDeliver}{$s$, $\mathcal{L}$}\Comment{Upcall from TTRB}
            \If{$s = 0$}\Comment{A proposal step}
                \State{$b\gets \text{an element of}\ \mathcal{B}$}
                \State{$m\gets \left[  \text{proposal}: b,  \text{vote}: \left\langle  \right\rangle \right]$}
                \State \Return{$m$}\Comment{Return $m$ to TTRB}
            \EndIf
            \If{$s = 2k+1$ for $k\geq 0$}\Comment{A commit step}
                \State{$m_{l}\gets \Call{ElectLeader}{\mathcal{L}}$}
                \State{$\Lambda_{0}\gets \text{the maximal grade-0 chain in }\mathcal{L}$}
                \If{$m_l$ carries a proposal $\Lambda_l$ and $\Lambda_l$ extends $\Lambda_{0}$}
                    \State{$m\gets \left[ \text{vote}:\Lambda_{l} \right]$}
                \Else
                    \State{$m\gets \left[ \text{vote}:\Lambda_{0} \right]$}
                \EndIf
                \State{$\Lambda_1\gets \text{the maximal grade-1 chain in }\mathcal{L}$}
                \State{\Call{Commit}{$\Lambda_{1}$}}\Comment{\parbox[t]{0.45\linewidth}{Upcall to the consensus module\vspace{1mm}}}
                \State{$\mathcal{B}\gets \mathcal{B}\setminus\operatorname{blocks}(\Lambda_{1})$}\Comment{\parbox[t]{0.4\linewidth}{Remove committed blocks from $\mathcal{B}$\vspace{1mm}}}
                \State{\Return{$m$}}\Comment{Return $m$ to TTRB}
            \EndIf
            \If{$s = 2k$ for $k> 0$}\Comment{A proposal step}
                \State{$\Lambda_{0}\gets \text{a random maximal grade-0 chain in }\mathcal{L}$}
                \State{$\Lambda_1\gets \text{the maximal grade-1 chain in }\mathcal{L}$}
                \State{$b\gets \text{an element of } \mathcal{B}$}
                \State{$\Lambda\gets \Lambda_0,b$}\Comment{Append $b$ to $\Lambda_{0}$}
                \State{$m\gets \left[ \text{vote}:\Lambda_{1},\text{proposal}:\Lambda \right]$}
                \State{\Return{$m$}}\Comment{Return $m$ to TTRB}
            \EndIf
        \EndProcedure
    \end{algorithmic}
\end{algorithm}

\subsection{Leader Election}
\label{sec:leader}

Each message delivered by Sieve to MMR is of the form $\langle m, dpow,w \rangle$, where $dpow$ is a \dpow evaluation, which we assume to be a random oracle.
Each round, each correct node $n$ picks a leader message as follows.
For each message $\langle m,dpow,w \rangle$ received by $n$ in the current step, $n$ creates $w$ tokens $H(dpow)$, $H(dpow+1)$, ..., $H(dpow+w-1)$ where $H$ is a random oracle ({\em e.g.}, a cryptographic hash function).
Then, $n$ selects as leader the message $m$ associated with the largest token among all tokens generated for all messages (assuming no collision).

By the correct supremacy assumption, with probability 2/3, a correct node has the largest token, and since all correct nodes receive all messages from all correct nodes of the previous step, with probability 2/3, all correct nodes agree on their leader message.
Depending on message weights, we may have to create a large number of tokens: Swiper~\cite{tonkikhSwiperNewParadigm2024} proposes algorithms to reduce the number of tokens needed.

\section{Implementing Deterministic Proof-of-Work}
\label{sec:det-pow}

In this section, we briefly present a concrete implementation of our \dpow primitive; our results rely on the black-box \dpow guarantees and any such implementation would work.
The construction is due to Coelho~\cite{coelhoAlmostConstantEffortSolutionVerification2008}.
We assume nodes have access to a random oracle function~$\mathcal{H}$ mapping binary strings of arbitrary length to binary strings of length~$\lambda$, for some security parameter~$\lambda$.
In practice, one can use SHA-256.

The implementation consists of two algorithms: an algorithm~$\mathcal{P}$ to generate a proof-of-work and an algorithm~$\mathcal{V}$ for verifying a proof-of-work.
Both algorithms are parameterized by a security parameter~$k$.
The algorithm~$\mathcal{P}$ takes as input a challenge~$\chi$ and an integer weight~$w$, and returns a proof~$dpow$; the verification algorithm~$\mathcal{V}$ takes as input a proof~$dpow$, a challenge~$\chi$, and a weight~$w$, and returns a boolean indicating whether the proof is valid or not.

The two algorithms guarantee that:
\begin{itemize}
    \item For every~$\chi$ and~$w\geq k$, a node that faithfully executes~$\mathcal{P}$ on inputs~$\chi$ and~$w$ makes~$2w+k$ calls to the random oracle and outputs a proof~$dpow$ such that~$\mathcal{V}(dpow, \chi, w)$ returns true.
    \item For every~$dpow$,~$\chi$, and~$w\geq k$, a node that faithfully executes~$\mathcal{V}$ on inputs~$dpow$,~$\chi$, and~$w$ makes~$k\log w$ calls to the random oracle.
    \item For every~$\chi$, $w\geq k$, and~$0<t\leq1$, if a node creates a proof~$dpow$ by calling the random oracle less than~$t(2w+k)$ times, then~$\mathcal{V}(dpow, \chi, w)$ returns true with probability less than~$t^k$.
\end{itemize}
Algorithm~$\mathcal{P}$ works as follows.
Given a challenge~$\chi$ and a weight~$w$, the algorithm computes a Merkle tree commitment~$\Phi$ to the~$w$ leaves~$l_{1}=\mathcal{H}(\chi)$, $l_{2}=\mathcal{H}(\chi+1)$, \dots, $l_{w}=\mathcal{H}(\chi+w-1)$.
Let~$\phi$ be the root of this tree.
The algorithm determines~$k$ distinct leaf indices by computing the natural number~$\lceil k\mathcal{H}(\phi+i)/2^{\lambda}\rceil$, starting with~$i=0$ and incrementing~$i$ until it obtains~$k$ distinct natural numbers.
The proof~$dpow$ then consists of the~$k$ Merkle paths corresponding to the~$k$ indices computed above,  plus the root~$\phi$.

Algorithm~$\mathcal{V}$ works as follows.
The algorithm first computes the~$k$ indices in the same way as in algorithm~$\mathcal{P}$, and then it checks that the provided Merkle paths are indeed correct Merkle paths corresponding to the~$k$ indices.

In practice, we must pick a concrete number of Merkle paths~$k$ that must be revealed by provers.
A larger~$k$ increases proof size and decreases the probability that an adversary that does not compute the full tree will create a proof that is accepted by some correct nodes.
Given a target security parameter~$p$ and a minimum-work threshold~$0<t\leq1$, we can ensure that no adversary produces a valid proof of work with probability higher than~$2^{-p}$ using fewer than~$n=t(2w+k)$ queries to the random oracle by choosing~$k$ such that~$t^{k}<2^{-p}$.
Practical deployments should then consider that Byzantine nodes are faster in producing \dpow evaluations by a factor of $1/t$.
This means that an adversary that is $1/3$-bounded in the model of~\Cref{sec: model} will in reality spend only a fraction $t/3$ of the energy or other real-world resources spent by all nodes in the system.

\section{Related Work}\label{sec: related}

{\bf Permissionless settings.}
Lewis-Pye and Roughgarden~\cite{lewis-pyePermissionlessConsensus2024} formally classify permissionless systems in three settings: ($i$) the {\em quasi-permissionless setting} ({\em e.g.}, Tendermint~\cite{buchmanLatestGossipBFT2019} or Algorand~\cite{Algorand}); ($ii$) the {\em dynamically-available setting} ({\em e.g.}, Ouroboros~\cite{kiayiasOuroborosProvablySecure2017}); and ($iii$) the {\em fully-permissionless setting} {\em (e.g.}, Bitcoin~\cite{nakamotoBitcoinPeertopeerElectronic2008}).
The quasi-permissionless setting and the dynamically-available setting model proof-of-stake systems, which track their participants on chain and typically assume that more than 1/2 or 2/3 are correct.
The first assumes always active correct participants, while the second allows them to be inactive as long as the remaining active correct participants still form a supermajority.
The sleepy model~\cite{pass_sleepy_2017} is similar to the dynamically-available setting, but with a static list of nodes.
The fully-permissionless setting further generalizes the dynamically-available setting by assuming no knowledge of participation.
Like Bitcoin, Sieve-MMR is fully-permissionless.

{\bf Sleepy/dynamically-available protocols.}
Sleepy consensus~\cite{pass_sleepy_2017} was the first consensus protocol to allow inactive correct participants.
Several other consensus protocols followed~\cite{daianSnowWhiteRobustly2019,kiayiasOuroborosProvablySecure2017,damatoGoldfishNoMore2023}, all guaranteeing safety and liveness probabilistically, until Momose and Ren~\cite{momose_constant_2022} achieved deterministic safety.
Later deterministically-safe protocols~\cite{malkhi_towards_2023,gafni_brief_2023} achieve a latency of a few message delays.
Sieve-MMR borrows the consensus logic of MMR, a deterministically-safe, dynamically-available TOB protocol~\cite[Appendix A]{malkhiByzantineConsensusFully2023}, and ports it to the fully-permissionless setting.

{\bf Mitigations against long-range attacks.}
PoS systems are vulnerable to long-range attacks that cause safety violations at little cost to the attacker.
Using VDFs (\emph{e.g.},~\cite{debPoSATProofofWorkAvailability2021a, Posh, long2019nakamoto, fairledger,chia_blockchain}) or ephemeral keys (\emph{e.g.},~\cite{Algorand, azouviWinkleFoilingLongRange2020}) is effective against posterior-corruption long-range attacks.
Nevertheless, in the absence of external trust assumptions, long-range attacks always prevent attaining slashable safety~\cite{tasBitcoinEnhancedProofofStakeSecurity2023}.
Babylon and Pikachu~\cite{azouviPikachuSecuringPoS2022,tasBitcoinEnhancedProofofStakeSecurity2023} prevent long-range attacks by checkpointing their state onto Bitcoin, which is an external trusted component.
Sieve-MMR is a PoW protocol and is resilient against long-range attacks.
Budish et al.~\cite{budishEconomicLimitsPermissionless2024} study the security of PoS, including long-range attacks, from an economic perspective.

{\bf Proof-of-work protocols.}
PoW protocols operate in the fully-permissionless setting and do not suffer from long-range attacks.
A line of work has improved PoW throughput~\cite{eyal2016ng,fitziParallelChainsImproving2018} and latency~\cite{bagaria2019prism,garay2024constant}, with Garay et al.~\cite{garay2024constant} achieving expected constant latency.
However, the protocol of Garay et al. relies on high-variance probabilistic building blocks that cannot be sensibly analyzed in a deterministic model.

Gorilla~\cite{Gorilla}, a BFT sequel to the Sandglass protocol~\cite{Sandglass}, achieves deterministic safety using verifiable delay functions as a PoW primitive, but it has a latency exponential in the number of participants.
Gorilla and Sieve-MMR rely on entirely different mechanisms to achieve consensus.
Gorilla executions proceed in asynchronous rounds, where a node proceeds to the next round if it receives a certain threshold of messages.
Then, inspired by  Ben-Or's protocol~\cite{ben-or}, the node proposes a value~$v$ in the next round if all of the messages it received in the previous round proposed~$v$ unanimously; otherwise, it picks a random value.
The node decides~$v$ if it keeps receiving messages unanimously proposing~$v$ for a sufficiently long sequence of  rounds---a fortunate event that is guaranteed to happen with positive probability.
However, since both the threshold of messages required to move to the next round, and the length of the sequence of unanimous rounds required to decide are proportional to the upper bound on the number of nodes, the probability of this fortunate event (and thus Gorilla average latency)  is exponential in that same upper bound. 
Sieve-MMR instead relies on quorum intersection arguments and, for them to work,  depends on the correct supremacy assumption.
These arguments do not depend on the number of messages, nor do they rely on the execution having produced a sufficient number of messages.

Keller and Böhme~\cite{kellerParallelProofofWorkConcrete2023a} propose a TOB protocol \(\mathcal{B}_k\) consisting of a sequence of instances of a consensus protocol \(\mathcal{A}_k\).
In \(\mathcal{A}_k\), roughly speaking, each node casts votes by solving Bitcoin-style probabilistic PoW puzzles and votes for the value that currently has the most votes; nodes decides on a value when it reaches \(k\) votes. Compared to Sieve-MMR, a disadvantage of this  approach is that, like Bitcoin, it cannot be meaningfully analyzed in a deterministic model.
Moreover, again like Bitcoin, the latency of this protocol depends on the desired level of security.
On the flip side, thanks to the probabilistic PoW puzzle, in expectation only a few nodes send concurrent messages for each consensus decision; in contrast, Sieve-MMR uses all-to-all communication at each protocol step.

\section{The Road Ahead}\label{sec: road-ahead}
This paper presents Sieve-MMR, a TOB protocol that achieves deterministic safety and constant expected latency in the fully permissionless model.
Sieve-MMR is composed of two layers: Sieve and MMR.
Our main contribution is Sieve, a novel algorithm that implements a novel broadcast primitive, TTRB.
TTRB enables the MMR protocol to operate in the fully permissionless model by providing the assumptions it typically relies on in the more restrictive dynamically available model. This work opens two promising directions for future research.

\textbf{\emph{Practical}, fast, and secure PoW consensus.} Sieve brings us to the threshold of a practical protocol, but main challenges remain: the exponential complexity of Bootstrap-Sieve when implemented naively, and the verification overhead of \dpow{}s. Bootstrap-Sieve, as presented here, functions more as a specification than a fully realized algorithm---it is largely declarative. Developing an efficient implementation would elevate Sieve from a theoretical construct to a protocol suitable for practical deployment. Regarding \dpow verification, it is highly advantageous for correct nodes to verify received messages in batches and within a short time window. Achieving this level of efficiency may require additional cryptographic tools, such as zero-knowledge proofs~\cite{zkp}.

\textbf{Porting PoS protocols to the PoW setting.}  Given the surgical nature of our construction, we conjecture that TTRB may serve as a general mechanism for porting other dynamically available protocols to the fully permissionless model. This raises an intriguing open question: is TTRB a canonical bridge between the dynamically available and fully permissionless models?

\section*{Acknowledgments}
We are grateful to the Sui Foundation for their generous support of this research.

\bibliography{sample-base}

\appendix

\section{Correctness of the Sieve-MMR Algorithm}
\label{sec:sieve-mmr-correctness}

In this section, we prove the correctness of the Sieve-MMR algorithm.
Since, in~\Cref{sec: correctness}, we have shown that Sieve implements TTRB, in this section we show that, assuming a correct TTRB implementation, the MMR algorithm implements~TOB.

To simplify the terminology, we say that a message is sent when it appears as an argument to a \textsc{TTRBCast} downcall to Sieve, and that it is received when it appears as an argument of a \textsc{TTRBDeliver} upcall from Sieve.

\subsection{Key Properties of TTRB}

We start with three key properties that stem directly from the guarantees of TTRB expressed in Assumption~\ref{assumption:ttrb-for-tob}.
Then we will show that these three properties imply the correctness of MMR when run on top of Sieve.

\begin{property}
    \label{lem:two-thirds-is-maj-correct}
    Consider a step~$s > 0$ and a correct node $n$ active in step~$s$.
    Assume that $M$ is a set of messages consisting of strictly more than two thirds (by weight) of the messages that $n$ receives in step~$s$.
    Then, $M$ includes a strict majority (by weight) of the correct messages sent in step~$s-1$.
\end{property}
\begin{proof}
    Let $\mathcal{L}$ be the set of messages delivered to $n$ in step $s$ and let $\mathcal{C}$ be the set of all correct timestamp-$(s-1)$ messages sent.
    With $\rho=1/3$ and Assumption~\ref{assumption:ttrb-for-tob}, we have that $\mathcal{C} \subseteq \mathcal{L}$ and $\mathcal{C}$ accounts for at least two thirds (by weight) of the messages in $\mathcal{L}$.
    Hence, if $M$ also consists of strictly more two thirds (by weight) of $\mathcal{L}$, then $\mathcal{C}\cap M$ is a strict majority (by weight) of $\mathcal{C}$.
\end{proof}

\begin{property}
    \label{lem:maj-correct-is-one-third}
    Consider a step~$s>0$ and a correct node $n$ active in step~$s$.
    Assume that $M$ is a set of messages consisting of a strict majority (by weight) of the correct messages sent in step~$s-1$.
    Then, for every node $n$ active in step~$s$, $M$ consists of strictly more than one-third (by weight) of the messages that $n$ receives in step~$s$.
\end{property}

\begin{property}
    \label{lem:one-third-includes-correct}
    Consider a step~$s>0$ and a correct node $n$ active in step~$s$.
    Assume that $M$ is a set of messages consisting of strictly more than one third (by weight) of the messages that $n$ receives in step~$s$.
    Then $M$ includes a message sent by a correct node in step $s-1$.
\end{property}

\subsection{Key Protocol Lemmas}

Next, we prove three lemmas that embody the key principles used in the MMR algorithm.
Once these lemmas are established, the rest of the correctness proofs are almost routine.

In each of the three lemmas, we consider a single step~$s$.
The first key lemma states that grade-1 chains are always compatible:
\begin{lemma}
    \label{lem:two-thirds-compatible}
    Consider a step~$s> 0$, two nodes $n$ and $p'$, and two chains $\Lambda$ and $\Lambda'$ such that $\Lambda$ has grade~1 at $n$ and $\Lambda'$ has grade~1 at~$p'$.
    Then $\Lambda$ and $\Lambda'$ are compatible.
\end{lemma}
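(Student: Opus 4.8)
The plan is to run a weighted quorum-intersection argument confined to the correct votes of step $s-1$, exploiting the gap between the strict $2/3$-supermajority that grade~1 demands and the $1/2$ threshold I actually need. The leverage comes from TTRB: a grade-1 quorum must overlap the correct votes so heavily that any two such quorums share a correct vote, and a single correct vote pins both chains below a common extension.

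First I would fix the set $\mathcal{C}$ of votes cast by correct nodes in step~$s-1$. By Assumption~\ref{assumption:ttrb-for-tob}, in particular~\ref{MMR2}, both $n$ and $p'$ receive every message in $\mathcal{C}$, and these messages carry the same weights in both views; so $\mathcal{C}$ is one fixed set, common to the two nodes, rather than two a priori different objects. Next I would unfold the grade-1 hypothesis at $n$: the votes $n$ receives for extensions of $\Lambda$ form a set $M$ whose weight is strictly more than $2/3$ of the total weight $n$ receives. By Property~\ref{lem:two-thirds-is-maj-correct}, $M$ contains a strict weighted majority of $\mathcal{C}$. Symmetrically, the set $M'$ of votes $p'$ receives for extensions of $\Lambda'$ contains a strict weighted majority of $\mathcal{C}$.

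The core step is the intersection. Two subsets of $\mathcal{C}$ that each exceed half of $\texttt{weight}(\mathcal{C})$ cannot be disjoint, so there is a correct vote $c$ lying in both $M\cap\mathcal{C}$ and $M'\cap\mathcal{C}$. Since $c$ is a correct vote, it votes for a single chain $\Gamma_c$, and recall that a vote for $\Gamma_c$ counts as a vote for every prefix of $\Gamma_c$. Membership $c\in M$ then means $\Gamma_c$ is an extension of $\Lambda$, i.e. $\Lambda$ is a prefix of $\Gamma_c$, and likewise $c\in M'$ forces $\Lambda'$ to be a prefix of $\Gamma_c$. Two chains that are both prefixes of $\Gamma_c$ are necessarily compatible, which is exactly the desired conclusion.

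I expect the only delicate point to be bookkeeping rather than mathematical depth. Because $n$ and $p'$ may receive entirely different Byzantine votes, the pigeonhole step cannot be applied to the full delivered sets $M$ and $M'$; it has to be localized to the common correct set $\mathcal{C}$, whose membership and weights are exactly what TTRB fixes. Once the argument is restricted to $\mathcal{C}$, both the weighted pigeonhole and the ``common extension forces compatibility'' step are immediate, so the whole proof is short.
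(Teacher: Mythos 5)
Your proof is correct and follows essentially the same route as the paper's: both apply Property~\ref{lem:two-thirds-is-maj-correct} to convert each grade-1 supermajority into a strict weighted majority of the correct step-$(s-1)$ votes, intersect the two majorities to obtain a common correct vote, and conclude compatibility because both chains are prefixes of that vote's chain. Your extra bookkeeping (using~\ref{MMR2} to argue that the correct vote set $\mathcal{C}$ and its weights are common to both nodes' views) is a point the paper leaves implicit inside Property~\ref{lem:two-thirds-is-maj-correct}, but it is the same argument.
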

\begin{proof}
    By~\Cref{lem:two-thirds-is-maj-correct}, we have that a strict majority (by weight) of the correct messages sent in step~$s-1$ votes for an extension of $\Lambda$.
    Similarly, a strict majority (by weight) of the messages sent in step~$s-1$ votes for an extension of $\Lambda'$.
    Since two strict majorities must intersect, we obtain a correct message sent in step~$s-1$ that votes for a chain $\Lambda''$ that is an extension of both $\Lambda$ and~$\Lambda'$.
    Thus $\Lambda$ and $\Lambda'$ are compatible.
\end{proof}
Note that~\Cref{lem:two-thirds-compatible} implies that, for each node~$n$ and step~$s$, there is a unique maximal grade-1 chain at $n$ in step~$s$.
This justifies our use of ``the maximal grade-1 chain'' in~\Cref{algo:mmr}.

Next, we turn to the second key lemma: if all correct nodes vote for compatible chains and if a chain $\Lambda$ has grade~1 at some node, then all chains that are maximal with grade~0 at any node are extensions of $\Lambda$.
\begin{lemma}
    \label{lem:one-third-extends-two-thirds}
    Consider a step~$s> 0$ and assume that all correct nodes active in step~$s-1$ vote for compatible chains.
    Consider two nodes $n$ and $n'$, and two chains $\Lambda$ and $\Lambda'$ such that $\Lambda$ has grade~1 at $n$ and $\Lambda'$ is maximal with grade~0 at $n'$.
    Then $\Lambda$ is a prefix of $\Lambda'$.
\end{lemma}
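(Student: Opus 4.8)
The plan is to show first that the grade-1 chain $\Lambda$ must also have grade~0 at the other node $n'$, and then to combine the maximality of $\Lambda'$ with the compatibility of correct votes to force $\Lambda$ to be a prefix of $\Lambda'$. The structure is: (i) derive grade~0 of $\Lambda$ at $n'$ to rule out $\Lambda' \subsetneq \Lambda$ via maximality, (ii) use the compatibility hypothesis on correct votes to show $\Lambda$ and $\Lambda'$ are compatible, and (iii) conclude.

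First I would unpack grade~1 of $\Lambda$ at $n$. By \Cref{lem:two-thirds-is-maj-correct}, the set $S_\Lambda$ of correct step-$(s-1)$ messages whose votes extend $\Lambda$ forms a strict majority (by weight) of all correct step-$(s-1)$ messages. Feeding $S_\Lambda$ into \Cref{lem:maj-correct-is-one-third} shows that at $n'$ these same votes account for strictly more than one third of the received weight; since every message in $S_\Lambda$ votes for an extension of $\Lambda$, this means $\Lambda$ itself has grade~0 at $n'$. Because $\Lambda'$ is maximal among grade-0 chains at $n'$, $\Lambda$ cannot be a \emph{strict} extension of $\Lambda'$, which already excludes the case $\Lambda' \subsetneq \Lambda$.

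Next I would show that $\Lambda$ and $\Lambda'$ are compatible. Since $\Lambda'$ has grade~0 at $n'$, \Cref{lem:one-third-includes-correct} yields a correct step-$(s-1)$ message $m'$ whose vote, say for a chain $\Gamma'$, extends $\Lambda'$. The crucial point is that $m'$ and every vote in $S_\Lambda$ are all cast by correct nodes, so by the lemma's hypothesis they vote for pairwise compatible chains. Picking any $v \in S_\Lambda$, whose chain $\Delta_v$ extends $\Lambda$, compatibility of $\Gamma'$ and $\Delta_v$ means one is a prefix of the other; in either orientation both $\Lambda$ and $\Lambda'$ become prefixes of the longer of the two chains (if $\Gamma' \subseteq \Delta_v$ then $\Lambda' \subseteq \Gamma' \subseteq \Delta_v$ and $\Lambda \subseteq \Delta_v$; if $\Delta_v \subseteq \Gamma'$ then $\Lambda \subseteq \Delta_v \subseteq \Gamma'$ and $\Lambda' \subseteq \Gamma'$), and hence $\Lambda$ and $\Lambda'$ are compatible. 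Compatibility gives $\Lambda \subseteq \Lambda'$ or $\Lambda' \subseteq \Lambda$, and since step~(i) already ruled out a strict $\Lambda' \subsetneq \Lambda$, we are left with $\Lambda \subseteq \Lambda'$ (equality folded in), as required.

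I expect the main obstacle to be the compatibility step: one must carefully combine the two existence facts—the majority $S_\Lambda$ coming from grade~1, and the single correct extender $m'$ coming from grade~0—while remembering that the hypothesis guarantees compatibility of the \emph{full} voted chains $\Gamma'$ and $\Delta_v$, not of $\Lambda$ and $\Lambda'$ directly. The observation that $\Lambda$ inherits grade~0 at $n'$ is the conceptual bridge that lets the maximality of $\Lambda'$ finish the argument, so I would make sure to establish it cleanly before invoking compatibility.
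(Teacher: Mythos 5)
Your proof is correct and follows essentially the same route as the paper's: both establish that $\Lambda$ acquires grade~0 at $n'$ by chaining \Cref{lem:two-thirds-is-maj-correct} and \Cref{lem:maj-correct-is-one-third}, and both obtain compatibility from the hypothesis on correct votes together with \Cref{lem:one-third-includes-correct} before letting the maximality of $\Lambda'$ close the argument. The only cosmetic difference is packaging: the paper first proves that all grade-0 chains at $n'$ are compatible (hence a unique maximal one), whereas you argue compatibility of $\Lambda$ and $\Lambda'$ directly via a specific pair of correct votes.
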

\begin{proof}
    First, note that, since all correct nodes active in step~$s-1$ vote for compatible chains, by~\Cref{lem:one-third-includes-correct}, all chains with grade 0 at a correct node in step $s$ are compatible and thus there is a unique maximal grade-0 chain at $n'$ in step $s$.
    Moreover, since $\Lambda$ has grade~1 at $n$ in step~$s$, by~\Cref{lem:two-thirds-is-maj-correct}, a strict majority (by weight) of the correct messages sent in step~$s-1$ votes for an extension of $\Lambda$.
    Thus, by~\Cref{lem:maj-correct-is-one-third}, $\Lambda$ has grade (at least) 0 at $n'$ in step~$s$.
    Finally, since $\Lambda'$ is the (unique) maximal grade-0 chain at $n'$ in step~$s$, we have that $\Lambda$ is a prefix of~$\Lambda'$.
\end{proof}

The third and last key lemma states that each correct node $n$ has at most two maximal grade-$0$ chains $\Lambda_1$ and $\Lambda_2$, and that there is a unique chain $\Lambda\in\{\Lambda_1,\Lambda_2\}$ such that, for every active correct node $n'$, if $\Lambda'$ has grade~1 at $n'$ then $\Lambda'$ is a prefix of $\Lambda$ (note that the order of quantification is important here: it is the same $\Lambda$ for every $n'$).
\begin{lemma}
    \label{lem:liveness-key}
    Consider a step~$s>0$ and a correct node $n$ active in step~$s$.
    There are at most two maximal grade~0 chains at $n$ in step~$s$ and, if $\{\Lambda_1,\Lambda_2\}$ is the set of maximal grade~0 chains at $n$ in step~$s$ (possibly $\Lambda_1=\Lambda_{2}$), then there is a chain $\Lambda\in \{\Lambda_1,\Lambda_2\}$ such that, for every correct node $n'$ active in step~$s$, if $\Lambda'$ is maximal with grade~1 at $n'$, then $\Lambda'$ is a prefix of $\Lambda$.
\end{lemma}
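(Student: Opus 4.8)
The plan is to prove the two claims separately, since they rely on different arguments. For the bound of two on the number of maximal grade-0 chains at $n$, I would use a direct quorum-intersection counting argument, exactly as hinted in the footnote. First observe that two \emph{distinct} maximal grade-0 chains must be incompatible: if they were compatible, one would be a prefix of the other, contradicting maximality. Next, a vote for a chain $\Lambda''$ counts toward the grade of $\Lambda$ only when $\Lambda$ is a prefix of $\Lambda''$, and no single $\Lambda''$ can extend two incompatible chains; hence the vote sets witnessing grade~0 of pairwise-incompatible maximal chains are pairwise disjoint. Since each such set has weight strictly more than $1/3$ of the total weight received by $n$, three distinct maximal grade-0 chains would force a combined weight strictly greater than the total, a contradiction. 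Thus at most two maximal grade-0 chains exist at $n$.

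For the second claim, the key idea is to identify a single chain $\hat\Lambda$ that dominates all grade-1 chains and then show it is a prefix of one of $n$'s maximal grade-0 chains. By~\Cref{lem:two-thirds-compatible}, any two chains that have grade~1 at correct active nodes are compatible; since there are only finitely many correct nodes active in step~$s$, the maximal grade-1 chains they hold form a finite, totally prefix-ordered set (this set is nonempty because the empty chain always has grade~1). Let $\hat\Lambda$ be the longest chain in this set; by construction, every maximal grade-1 chain $\Lambda'$ at a correct active node $n'$ is a prefix of $\hat\Lambda$. It then suffices to show that $\hat\Lambda$ is a prefix of some maximal grade-0 chain $\Lambda$ at $n$, and to take that $\Lambda$ as the witness: for every correct active $n'$ we would have $\Lambda'$ a prefix of $\hat\Lambda$ and $\hat\Lambda$ a prefix of $\Lambda$, hence $\Lambda'$ a prefix of $\Lambda$.

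To establish that $\hat\Lambda$ has grade~0 at $n$, I would chain together the TTRB-derived properties. Because $\hat\Lambda$ was chosen as a maximal grade-1 chain at some concrete correct active node $n^\ast$, it genuinely has grade~1 at $n^\ast$, so~\Cref{lem:two-thirds-is-maj-correct} yields a strict majority (by weight) of the correct step-$(s-1)$ messages voting for extensions of $\hat\Lambda$. Every such correct message is received by $n$ (by the TTRB delivery guarantee, \ref{MMR2}), so by~\Cref{lem:maj-correct-is-one-third} these votes account for strictly more than $1/3$ of the weight $n$ receives; therefore $\hat\Lambda$ has grade~0 at $n$ and is a prefix of some maximal grade-0 chain at $n$.

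The main obstacle, and the conceptual heart of the argument, is this final cross-node transfer: $\hat\Lambda$ has grade~1 at a node $n^\ast$ that may differ from $n$, yet the conclusion concerns grades at $n$. This is precisely where the TTRB guarantees do the heavy lifting, translating a grade-1 quorum at one correct node into a correct-majority of step-$(s-1)$ votes that every other correct node is guaranteed to receive, and thus into grade~0 at $n$. I would be careful to note two points that make this step rigorous: that $\hat\Lambda$ is an \emph{actual} maximal grade-1 chain at some correct active node (not merely an abstract supremum), so that~\Cref{lem:two-thirds-is-maj-correct} applies, which is ensured by picking $\hat\Lambda$ from a concrete finite set; and that this route deliberately avoids~\Cref{lem:one-third-extends-two-thirds}, so that no prior assumption about correct nodes voting for compatible chains is needed.
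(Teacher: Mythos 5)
Your proof is correct and rests on exactly the same ingredients as the paper's: the quorum-intersection bound for the first claim, and, for the second, \Cref{lem:two-thirds-compatible} combined with the cross-node grade transfer given by \Cref{lem:two-thirds-is-maj-correct,lem:maj-correct-is-one-third}. The only difference is organizational: the paper applies the transfer to \emph{every} correct node's maximal grade-1 chain (so each lands on $\Lambda_1$ or $\Lambda_2$) and then uses compatibility in a final contradiction argument to rule out a split between the two, whereas you invoke compatibility up front to extract a single longest grade-1 chain $\hat\Lambda$, apply the transfer once to $\hat\Lambda$, and finish by transitivity of the prefix relation---a tidy reordering of the same argument.
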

\begin{proof}
    First note that, by a simple intersection argument, there are at most two maximal grade-0 chains $\Lambda_1$ and $\Lambda_2$ at $n$ in step~$s$.

    Next, for every node $n'$ active in step~$s$, let $\Lambda_{n'}$ be the maximal chain with grade~1 at $n'$.
    Note that, by~\Cref{lem:two-thirds-is-maj-correct,lem:maj-correct-is-one-third}, we have that $\Lambda_{n'}$ has grade~0 at $n$ in step~$s$.
    Thus, for every node $n'$ active in step~$s$, $\Lambda_{n'}$ is a prefix of either $\Lambda_1$ or $\Lambda_2$.
    It remains to show that either (a) for every $n'$, $\Lambda_{n'}$ is a prefix of $\Lambda_1$ or (b) for every $n'$, $\Lambda_{n'}$ is a prefix of $\Lambda_2$.

    Consider two correct nodes $n'$ and $n''$ active in step~$s$.
    Suppose towards a contradiction that (a) $\Lambda_{n'}$ is a prefix of $\Lambda_2$ but is incompatible with $\Lambda_1$ and (b) that $\Lambda_{n''}$ is a prefix of $\Lambda_1$ but is incompatible with $\Lambda_2$.
    From (a) and (b) we get that $\Lambda_{n'}$ and $\Lambda_{n''}$ are incompatible.
    This contradicts~\Cref{lem:two-thirds-compatible}, which states that all chains that are the maximal grade-1 chain of some node in step~$s$ are compatible.
\end{proof}

\subsection{Consistency}

First, we show that, in every commit step, maximal grade-0 chains are unique.
This justifies the use of ``the maximal grade-0 chain'' in the commit-step branch of~\Cref{algo:mmr}.
\begin{lemma}
    \label{lem:unique-local-one-third}
    For every $k\geq 0$, in every commit step~$2k+1$, for every node $n$ active in step~$2k+1$, there is a unique maximal grade-0 chain at~$n$.
\end{lemma}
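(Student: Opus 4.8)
The plan is to prove uniqueness by establishing something stronger than the ``at most two'' bound of footnote~3: in a commit step, \emph{all} grade-$0$ chains at a correct active node~$n$ are pairwise compatible, so they form a totally ordered set under the prefix relation and therefore have a unique maximal element. The whole argument hinges on one structural fact that distinguishes commit steps from proposal steps, namely that the step immediately preceding a commit step is one in which all correct nodes vote for pairwise-compatible chains. (I would prove the statement for a correct active node~$n$, since this is the only case the algorithm invokes and since the key TTRB property I rely on requires $n$ to be correct; I would flag this restriction explicitly.)

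First I would establish the predecessor-step fact. Since commit steps are the odd steps, the step $s-1$ preceding a commit step $s$ is either the genesis step $0$ or an even proposal step. If $s=1$, then in step $0$ every correct node votes for the empty chain, so the correct votes are trivially compatible. Otherwise $s-1$ is a proposal step, and by the MMR algorithm every correct node active in step $s-1$ votes for its maximal grade-$1$ chain; by~\Cref{lem:two-thirds-compatible}, the maximal grade-$1$ chains of any two nodes are compatible, so again all correct nodes active in step $s-1$ vote for pairwise-compatible chains.

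With this in hand I would carry out the main argument. Write $X\preceq Y$ when $X$ is a prefix of $Y$. Suppose $\Lambda$ and $\Lambda'$ both have grade~$0$ at $n$ in step $s$. By definition the votes received by $n$ for extensions of $\Lambda$ carry strictly more than one third of the weight, so by~\Cref{lem:one-third-includes-correct} some correct node sent a step-$(s-1)$ vote for a chain $V_1$ with $\Lambda\preceq V_1$; symmetrically there is a correct step-$(s-1)$ vote for a chain $V_2$ with $\Lambda'\preceq V_2$ (here I use that a vote for a chain counts as a vote for all its prefixes, which is exactly what lets me pass from grade-$0$ support to a correct supporter whose voted chain extends the grade-$0$ chain). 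By the fact established above, $V_1$ and $V_2$ are compatible; taking without loss of generality $V_1\preceq V_2$, both $\Lambda$ and $\Lambda'$ are prefixes of $V_2$ and hence compatible with each other. Thus every pair of grade-$0$ chains at $n$ is compatible, the set of grade-$0$ chains is totally ordered by $\preceq$, and it has a unique maximal element.

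The main obstacle will be getting the predecessor-step fact stated and applied cleanly, since it is precisely what fails in a proposal step and is the reason the lemma is restricted to commit steps. In particular I must handle the genesis base case $s=1$ separately, because \Cref{lem:two-thirds-compatible} does not apply when the predecessor is step~$0$, and I must invoke \Cref{lem:one-third-includes-correct} only under its hypotheses ($s>0$ and $n$ correct). The remaining steps---chaining the prefix relations $\Lambda\preceq V_1\preceq V_2$ and $\Lambda'\preceq V_2$ and concluding that a pairwise-compatible family of chains has a unique maximal element---are routine.
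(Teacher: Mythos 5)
Your proof is correct and takes essentially the same route as the paper's: both arguments use \Cref{lem:one-third-includes-correct} to obtain, for each grade-0 chain, a correct step-$(s-1)$ vote extending it, and then use \Cref{lem:two-thirds-compatible} (via the fact that in the preceding proposal step correct nodes vote for their maximal grade-1 chains) to conclude that those correct votes, and hence the grade-0 chains themselves, are compatible. The differences are only presentational---you establish pairwise compatibility of \emph{all} grade-0 chains and conclude by total ordering where the paper argues by contradiction on two maximal ones, and you additionally make explicit the genesis base case and the restriction to correct nodes, both of which the paper leaves implicit.
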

\begin{proof}
    Consider a node $n$ active in commit step~$2k+1$ and suppose towards a contradiction that there are two different chains $\Lambda$ and $\Lambda'$ that are maximal with grade~0 at $n$.
    By the definition of maximal, $\Lambda$ and $\Lambda'$ are incompatible.

    By~\Cref{lem:one-third-includes-correct}, at least one active correct node $n'$ voted for an extension of $\Lambda$ in step~$2k$ and at least one active correct node $n''$ voted for an extension of $\Lambda'$ in step~$2k$.
    If~$k=0$, all correct nodes vote for the empty chain in step~$0$, so $\Lambda$ and $\Lambda'$ are both the empty chain, which is a contradiction.
    If~$k>0$, then step~$2k$ is a proposal step, and by~\Cref{lem:two-thirds-compatible}, all correct nodes that are active in step~$2k$ vote for compatible chains in step~$2k$.
    Thus, $\Lambda$ and $\Lambda'$ are compatible, which is a contradiction.
\end{proof}

Next, we show that, once a chain is committed by a correct node, all correct nodes forever vote for extensions of that chain.
\begin{lemma}
    \label{lem:committed-supported-forever}
    If a chain $\Lambda$ is committed by a correct node in a step~$s$, then, in step~$s$ and in all subsequent steps, all online correct nodes vote for an extension of $\Lambda$.
\end{lemma}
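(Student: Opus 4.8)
The plan is to prove the statement by induction on the step number $s' \geq s$, maintaining the invariant $P(s')$: every online correct node votes for an extension of $\Lambda$ in step $s'$. The crucial observation that drives the whole argument is that $\Lambda$ never loses grade~1 at any correct node, because the correct messages of any step always carry strictly more than $2/3$ of the weight (Property~\ref{MMR1} of Assumption~\ref{assumption:ttrb-for-tob}, which TTRB with $\rho=1/3$ guarantees), and, by the inductive hypothesis, those correct messages all vote for extensions of $\Lambda$.

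Since commits happen only in commit (odd) steps, $s$ is odd and, by the commit rule of \Cref{algo:mmr}, $\Lambda$ is the maximal grade-1 chain at the committing node $n$ in step $s$; in particular $\Lambda$ has grade~1 at $n$. For the base case I would argue as follows. Step $s-1$ is a proposal step in which correct nodes vote for their maximal grade-1 chains; by \Cref{lem:two-thirds-compatible} these chains are pairwise compatible, so the precondition of \Cref{lem:one-third-extends-two-thirds} is met. (If $s=1$ then $s-1=0$ and all correct nodes voted for the empty chain, compatible with everything, so the precondition holds trivially.) Applying \Cref{lem:one-third-extends-two-thirds} with the grade-1 chain $\Lambda$ at $n$, I get that the maximal grade-0 chain of every correct node active in step $s$ extends $\Lambda$. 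Since in a commit step each correct node votes either for its maximal grade-0 chain or for a leader proposal that extends it, every such node votes for an extension of $\Lambda$, establishing $P(s)$.

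For the inductive step, assume $P(s'-1)$ and split on the parity of $s'$. In both cases I first re-establish that $\Lambda$ has grade~1 at every correct node active in step $s'$: by $P(s'-1)$ all correct messages of step $s'-1$ vote for extensions of $\Lambda$, and by Property~\ref{MMR1} they account for strictly more than $2/3$ of the received weight, so extensions of $\Lambda$ clear the grade-1 threshold. If $s'$ is a proposal step, each correct node votes for its maximal grade-1 chain; since $\Lambda$ has grade~1 there and, by \Cref{lem:two-thirds-compatible}, all grade-1 chains at a single node are compatible, the maximal grade-1 chain cannot be a strict prefix of $\Lambda$ and must therefore extend $\Lambda$, giving $P(s')$. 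If $s'$ is a commit step, I reuse the base-case argument: step $s'-1$ is a proposal step whose correct votes are compatible (\Cref{lem:two-thirds-compatible}), so \Cref{lem:one-third-extends-two-thirds} applies with the grade-1 chain $\Lambda$ and shows that every correct node's maximal grade-0 chain extends $\Lambda$; hence every correct node votes for an extension of $\Lambda$, again giving $P(s')$.

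The main obstacle I anticipate is the bookkeeping of step parity, so that the precondition of \Cref{lem:one-third-extends-two-thirds}---that all correct nodes active in the previous step vote for compatible chains---is genuinely available at each invocation; this precondition is free only because the step preceding a commit step is a proposal step, whose correct (grade-1) votes are compatible, and one must take care not to invoke the lemma where this is not yet justified. The single load-bearing idea is the grade-1 persistence of $\Lambda$, which couples the inductive hypothesis with the correct-supremacy-derived $2/3$ weight bound; everything else is a direct read-off of the voting rules in \Cref{algo:mmr}.
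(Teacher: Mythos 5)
Your proof is correct and takes essentially the same route as the paper's: the base case is identical (commit rule gives $\Lambda$ grade~1 at the committing node, compatibility of the proposal-step votes via \Cref{lem:two-thirds-compatible} enables \Cref{lem:one-third-extends-two-thirds}, and the commit-step voting rule finishes). Your explicit induction---grade-1 persistence of $\Lambda$ from Property~\ref{MMR1} plus the parity case split---is exactly the detail the paper compresses into ``it is easy to see that, from there on, $\Lambda$ remains a prefix of every vote by every correct node.''
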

\begin{proof}
    Consider a chain $\Lambda$ committed by a correct node $n$ in a step~$s$, and consider a correct node $n'$ active in step~$s$ and the chain  $\Lambda'$ that $n'$ votes for in step~$s$.

    First, note that, by~\Cref{lem:two-thirds-compatible}, all correct nodes active in step~$s-1$ vote for compatible chains.
    Moreover, by the algorithm, $\Lambda$ has grade~1 at $n$.
    Moreover, by the algorithm, $\Lambda'$ is maximal with grade~0 at $n$.
    Hence, by~\Cref{lem:one-third-extends-two-thirds}, $\Lambda$ is a prefix of $\Lambda'$.

    We have just established that every correct node active in step~$s$ votes in step~$s$ for an extension of~$\Lambda$.
    It is easy to see that, from there on, $\Lambda$ remains a prefix of every vote by every correct node.
\end{proof}

\Cref{lem:committed-supported-forever} easily leads us to our first theorem:
\begin{theorem}
    The MMR algorithm satisfies its consistency property.
\end{theorem}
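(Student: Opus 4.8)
The plan is to reduce the statement to a claim about the \emph{maximal grade-1 chains} that correct nodes commit at commit steps, and then split into two cases according to whether the two commits happen in the same step or in different steps. First I would observe that, by the algorithm, whenever a correct node commits it commits its maximal grade-1 chain in some (odd) commit step, and that every committed chain is by definition a prefix of such a maximal grade-1 chain; since prefixes of compatible chains are themselves compatible, it suffices to show that any two maximal grade-1 chains committed by correct nodes are compatible. So I would fix a correct node $n_1$ committing a chain $\Lambda_1$ (its maximal grade-1 chain) at commit step $s_1$, and a correct node $n_2$ committing $\Lambda_2$ at commit step $s_2$, and assume without loss of generality that $s_1 \le s_2$.

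In the case $s_1 = s_2$, the chains $\Lambda_1$ and $\Lambda_2$ both have grade~1 at their respective nodes in the \emph{same} step, so \Cref{lem:two-thirds-compatible} immediately gives compatibility. The interesting case is $s_1 < s_2$; since both $s_1$ and $s_2$ are odd commit steps, we have $s_2 - 1 \ge s_1$. Here the strategy is to exhibit a single correct vote that simultaneously extends $\Lambda_1$ and $\Lambda_2$. Because $\Lambda_2$ has grade~1 at $n_2$ in step $s_2$, \Cref{lem:two-thirds-is-maj-correct} tells us that a strict majority (by weight) of the correct messages \emph{sent in step} $s_2-1$ vote for an extension of $\Lambda_2$; in particular, at least one correct node $m$ active in step $s_2-1$ votes for an extension of $\Lambda_2$. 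On the other hand, since $\Lambda_1$ was committed at step $s_1$ and $s_2-1 \ge s_1$, \Cref{lem:committed-supported-forever} guarantees that \emph{every} online correct node in step $s_2-1$ votes for an extension of $\Lambda_1$; this applies to $m$. Thus the single chain $m$ votes for in step $s_2-1$ has both $\Lambda_1$ and $\Lambda_2$ as prefixes, which forces $\Lambda_1$ and $\Lambda_2$ to be compatible.

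I expect the main obstacle to be the step-index bookkeeping rather than any deep argument, since \Cref{lem:committed-supported-forever} already does the heavy lifting. The delicate points are (i)~aligning the fact that grade is computed over votes \emph{received} in the current step while those votes were \emph{sent} in the previous step, so that the relevant correct vote lives in step $s_2-1$, and (ii)~checking that $s_2-1 \ge s_1$ so that \Cref{lem:committed-supported-forever} genuinely applies at step $s_2-1$ (which holds because commit steps are odd and $s_1 < s_2$ forces $s_2 \ge s_1 + 2$). The conceptual crux is simply to notice that one correct voter in the critical step must support both committed chains at once, turning the intersection of a persistent support invariant (from $\Lambda_1$) with a fresh majority of correct support (for $\Lambda_2$) into the desired compatibility.
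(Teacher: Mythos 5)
Your proof is correct, and it follows the paper's skeleton closely: the same case split on whether the two commits happen in the same step, the same use of \Cref{lem:two-thirds-compatible} for the equal-step case, and the same reliance on \Cref{lem:committed-supported-forever} for the cross-step case. Where you genuinely diverge is in how you close the cross-step case. The paper argues that, since all correct nodes active in step $s_2-1$ vote for extensions of $\Lambda_1$, the TTRB guarantee (Assumption~\ref{assumption:ttrb-for-tob}) forces $\Lambda_1$ to have grade~1 at $n_2$ in step $s_2$, and then concludes that $\Lambda_1$ is a prefix of $\Lambda_2$ by maximality of $n_2$'s grade-1 chain. You instead extract, via \Cref{lem:two-thirds-is-maj-correct}, a single correct voter $m$ in step $s_2-1$ whose vote extends $\Lambda_2$, and observe that by \Cref{lem:committed-supported-forever} that same vote also extends $\Lambda_1$, so both chains are prefixes of one chain and hence compatible. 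Both closings are sound: yours trades the paper's grade-promotion-plus-maximality step for an intersection argument through a common correct voter, with the small advantage of not needing compatibility/uniqueness of grade-1 chains at $n_2$ in step $s_2$ (only the non-emptiness of a strict weighted majority of correct messages, which holds because at least one correct node is active in every step and weights are positive); the paper's version is marginally shorter because it reuses the committed chain's grade directly. Your preliminary reduction---committed chains are prefixes of maximal grade-1 chains, and prefixes of compatible chains are compatible---is a careful touch the paper leaves implicit.
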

\begin{proof}
    Consider two chains $\Lambda$ and $\Lambda'$ committed by two correct nodes $n$ and $n'$ in steps~$s$ and~$s'$.
    Note that, by the algorithm, a correct node commits a chain $\Lambda$ only when $\Lambda$ has grade~1.
    Thus, if $s=s'$, then, by~\Cref{lem:two-thirds-compatible}, $\Lambda$ and $\Lambda'$ are compatible.

    Now suppose that $s<s'$.
    By~\Cref{lem:committed-supported-forever}, in step~$s'-1$, all active correct nodes vote for an extension of $\Lambda$.
    Therefore, $\Lambda$ has grade-1 at $n'$ in step~$s'$, and since $\Lambda'$ is the maximal grade-1 chain at $n'$ in step~$s'$, we have that $\Lambda$ is a prefix of $\Lambda'$.
\end{proof}

\subsection{Liveness}

Finally, we turn to liveness.
First we show that, for every proposal step~$2k$, $k\geq 0$, with probability greater than 1/3, there is a correct node~$n$ in step~$2k$ that proposes a chain $\Lambda$ and $\Lambda$ is committed in step~$2k+3$.
\begin{lemma}
    \label{lem:commit-probability}
    For every proposal step~$2k$, $k\geq 0$, with probability greater than 1/3, a chain proposed by a correct node in step~$2k$ is committed in step~$2k+3$.
\end{lemma}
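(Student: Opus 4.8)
The plan is to isolate a single favorable event of probability strictly greater than $1/3$, under which the chain proposed by the leader in step $2k+1$ propagates through steps $2k+2$ and $2k+3$ and is committed by every active correct node in step $2k+4$. I would take this event to be the conjunction of two sub-events: $E_1$, that the leader-election component succeeds, so that all active correct nodes agree on the same leader $l$ and $l$ is a message sent by a correct node $n_l$ in step $2k+1$ (this has probability strictly greater than $2/3$ by the guarantee in \S\ref{sec:leader}); and $E_2$, that when $n_l$ built its proposal it appended a fresh block $b$ to the ``right'' one of its maximal grade-$0$ chains, namely the chain $\Gamma^\ast$ furnished by \Cref{lem:liveness-key}. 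Since by \Cref{lem:liveness-key} there are at most two maximal grade-$0$ chains at $n_l$ and the algorithm extends one of them uniformly at random, $E_2$ has probability at least $1/2$. Writing $\Lambda = \Gamma^\ast, b$ for this proposal, the goal is to show that, conditioned on $E_1\cap E_2$, the chain $\Lambda$ is committed at step $2k+4$.

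The core step is to argue that, under $E_1\cap E_2$, every active correct node votes for $\Lambda$ in the ensuing commit step $2k+2$. By \Cref{lem:liveness-key}, the defining property of $\Gamma^\ast$ is that every maximal grade-$1$ chain held by a correct node active in step $2k+1$ is a prefix of $\Gamma^\ast$; since in a proposal step each correct node votes for its maximal grade-$1$ chain, every correct step-$(2k+1)$ vote is a prefix of $\Gamma^\ast$. Now fix an active correct node $\tilde n$ in step $2k+2$ and let $\Delta$ be its maximal grade-$0$ chain, which is unique by \Cref{lem:unique-local-one-third}. By \Cref{lem:one-third-includes-correct}, the votes for extensions of $\Delta$ contain a correct step-$(2k+1)$ vote, and this vote extends $\Delta$; as that vote is a prefix of $\Gamma^\ast$, so is $\Delta$, and hence $\Lambda = \Gamma^\ast, b$ extends $\Delta$. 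Because $E_1$ makes $\tilde n$'s leader equal to $n_l$'s message carrying $\Lambda$, the commit-step rule has $\tilde n$ vote for $\Lambda$. Thus all active correct nodes vote for $\Lambda$ in step $2k+2$.

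From there the commit follows by a routine two-step propagation. In step $2k+3$, every active correct node receives these unanimous correct votes for $\Lambda$; by the TTRB guarantee (\Cref{assumption:ttrb-for-tob}), correct messages account for strictly more than $2/3$ of the received weight, so $\Lambda$ has grade $1$, and it is in fact the maximal grade-$1$ chain since no strict extension can reach the $2/3$ threshold when the adversary is $1/3$-bounded. Hence every active correct node votes for $\Lambda$ again. Repeating the same reasoning in step $2k+4$ gives $\Lambda$ grade $1$ at every active correct node, so each commits the maximal grade-$1$ chain $\Lambda$, which establishes the claim.

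The step I expect to be the main obstacle is the probability bookkeeping: showing that $P(E_1\cap E_2) > \tfrac{2}{3}\cdot\tfrac{1}{2} = \tfrac{1}{3}$ requires that $n_l$'s random choice of maximal grade-$0$ chain be independent of the leader-election outcome. I would ground this in the random-oracle modeling of \dpow: the leader-election token $H(dpow)$ attached to $n_l$'s message is a fresh uniform value whose distribution does not depend on which grade-$0$ chain $n_l$ extended, since distinct choices produce distinct messages and hence independent fresh tokens. Consequently, conditioning on ``$n_l$ is the common leader'' does not bias $n_l$'s choice away from $\Gamma^\ast$, giving $P(E_2\mid E_1)\ge \tfrac{1}{2}$; combined with the strict bound $P(E_1) > \tfrac{2}{3}$, this yields $P(E_1\cap E_2) > \tfrac{1}{3}$. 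Making this independence argument fully rigorous within the random-oracle abstraction is the delicate part of the proof.
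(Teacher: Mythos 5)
Your proof is correct and follows essentially the same route as the paper's: the same decomposition into a leader-election success event (probability strictly greater than $2/3$) and the event that the leader extended the favorable maximal grade-$0$ chain guaranteed by \Cref{lem:liveness-key} (probability at least $1/2$), followed by the same propagation of that proposal to a commit three steps later. The only difference is the level of detail: you make explicit the prefix argument via \Cref{lem:one-third-includes-correct} and \Cref{lem:unique-local-one-third}, the step-by-step propagation, and the independence needed to multiply the two probabilities, all of which the paper's proof leaves implicit.
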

\begin{proof}
    Consider a proposal step~$s=2k$, $k\geq 0$.
    First note that, by assumption, with probability strictly more than 2/3 (by weight), all active correct nodes in step~$s+1$ agree on a leader message $l$ that is sent by a correct node in step~$s$.

    If~$k=0$, all correct nodes vote for the empty chain in step~$0$.
    Thus, the proposal carried by~$l$ extends the maximal grade-0 chain of every active correct node in step~$1$, and every active correct node votes for that proposal in step~$1$.
    The proposal is therefore committed in step~$3$ with probability strictly more than~$2/3$.

    Now assume~$k>0$.
    Moreover, by~\Cref{lem:liveness-key} and by the algorithm, with probability at least 1/2, in step~$s$, the sender of the leader message~$l$ proposes a chain $\Lambda_l$ that is an extension of every chain that every correct node votes for in step~$s$.
    $\Lambda_l$ is therefore an extension of the maximal grade-0 chain of every active correct node in step~$s+1$, and thus, in step~$s+1$, every active correct node votes for $\Lambda_l$.
    $\Lambda_l$ is therefore subsequently committed in step~$s+3$.

    We conclude that, with probability strictly greater than $2/3\cdot 1/2=1/3$, a proposal from a correct node is committed in step~$s+3$.
\end{proof}

\begin{theorem}
    The MMR algorithm satisfies its liveness property.
\end{theorem}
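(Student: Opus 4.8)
The plan is to bootstrap the theorem from \Cref{lem:commit-probability}, which already performs the hard per-step probabilistic work: it guarantees that at each proposal step $2k+1$, with probability strictly greater than $1/3$, a chain proposed by a correct node is committed a constant number of steps later. Since, by the algorithm, every correct proposal appends a submitted block that is not in the proposer's last committed chain, each such successful commit strictly extends the committed chain with a fresh correct block. The theorem then follows from a Borel--Cantelli-style argument: there are infinitely many commit opportunities, each succeeding with probability bounded below by a constant, so with probability $1$ infinitely many succeed, and the common committed chain grows without bound.

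Concretely, first I would isolate disjoint windows of steps, one per proposal step $2k+1$, chosen so that the $2k+4$-step commit attempt analyzed in \Cref{lem:commit-probability} for one window does not overlap the next. Within each window the success event depends only on freshly generated randomness: the leader-election randomness of that step, which is independent across steps because \dpow is modeled as a random oracle and correct nodes send unique messages, together with the proposer's random choice among its maximal grade-0 chains. Hence the per-window success events are mutually independent, each with probability $>1/3$.

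Next I would use independence to conclude that the probability that all windows fail is at most $\prod_k (1 - 1/3) = 0$; equivalently, by the second Borel--Cantelli lemma, with probability $1$ infinitely many windows succeed. Each success commits at some active correct node a correct proposal $\Lambda_l$ carrying a fresh block. By the consistency theorem together with \Cref{lem:committed-supported-forever}, once such a $\Lambda_l$ is committed by a correct node, all correct nodes forever vote for extensions of it, so it is adopted as a prefix of every subsequently committed chain and thus becomes part of the longest chain committed by \emph{all} correct nodes. Infinitely many successes therefore force this common committed chain to include new correct blocks, which establishes the progress property with probability $1$.

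The main obstacle I expect is the independence bookkeeping: one must verify that the constant lower bound of \Cref{lem:commit-probability} holds conditionally on the entire past execution, so that the windows can legitimately be treated as independent (or at least so that the conditional failure probabilities still multiply to zero despite adversarial scheduling). This amounts to checking that the lemma's two probabilistic ingredients---agreement on a correct leader (probability $>2/3$) and the proposer selecting the distinguished maximal grade-0 chain guaranteed by \Cref{lem:liveness-key} (probability $\geq 1/2$)---both draw on randomness freshly produced in the corresponding step, and are therefore unaffected by how the adversary replays or delays earlier messages. A secondary subtlety, which \Cref{lem:committed-supported-forever} and consistency resolve, is confirming that a proposal committed by one correct node genuinely enters the chain committed by \emph{every} correct node, rather than merely being locally committed.
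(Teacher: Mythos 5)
Your proof is correct and takes essentially the same route as the paper: the paper's entire proof of this theorem is the one-line remark that liveness follows ``easily'' from \Cref{lem:commit-probability}, and your argument is precisely a careful filling-in of that step. The details you supply---disjoint commit windows, the observation that the per-step success bound rests on fresh randomness so conditional failure probabilities multiply, and the use of \Cref{lem:committed-supported-forever} with consistency to promote a local commit to a prefix of every correct node's chain---are all sound and go beyond what the paper writes down, but they do not constitute a different approach.
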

\begin{proof}
    With~\Cref{lem:commit-probability}, we easily obtain the liveness property of total-order broadcast.
\end{proof}

Finally, we show that progress is made in an expected 7 steps despite Byzantine behavior.
\begin{theorem}
    In the most general Byzantine case, for every proposal step~$s$, in expectation, the algorithm commits a block that was proposed by a correct node during or after step~$s$ in step $s+7$.
\end{theorem}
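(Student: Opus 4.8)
The plan is to combine \Cref{lem:commit-probability} with a geometric-trials argument over successive proposal steps. \Cref{lem:commit-probability} tells us that at any proposal step a chain proposed by a correct node at that step is committed exactly three steps later with probability strictly greater than $1/3$. Since proposal steps occur every two steps, I would view the proposal steps $s, s+2, s+4, \dots$ as a sequence of trials, where the trial at step $s+2j$ ``succeeds'' if a correct chain proposed there is committed at step $s+2j+3$. The goal is to bound the expected index of the first success and to translate that into a commit step.

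First I would argue that, conditioned on the failure of all earlier trials, the success probability of each trial remains strictly greater than $1/3$. The bound of \Cref{lem:commit-probability} holds for every proposal step irrespective of the execution history, because the only randomness it exploits is the leader election at the corresponding commit step, and leader elections at distinct commit steps draw fresh, independent randomness from the \dpow random oracle (\S\ref{sec:leader}). Moreover, MMR's per-step actions depend only on the messages of the current step, so a failed attempt neither consumes the fresh correct block available for the next proposal (the client always supplies one, \S\ref{sec: model}) nor otherwise hampers future trials. Hence the number of proposal steps until the first success, call it $X\ge 1$, is stochastically dominated by a $\mathrm{Geometric}(1/3)$ random variable, giving $\mathbb{E}[X]\le 3$.

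With this in hand, the arithmetic is immediate. The $X$-th proposal step counting from $s$ is step $s+2(X-1)$, and by the definition of success the corresponding correct chain is committed at step $s+2(X-1)+3$. The committed block was proposed at step $s+2(X-1)\ge s$ by a correct node, as required by the statement. Taking expectations,
\begin{equation*}
\mathbb{E}\bigl[\text{commit step}\bigr] = s + 3 + 2\bigl(\mathbb{E}[X]-1\bigr) \le s + 3 + 2(3-1) = s+7 .
\end{equation*}

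The main obstacle is the conditional-independence step of the second paragraph: one must verify that the $1/3$ figure of \Cref{lem:commit-probability} really is a per-step worst-case bound that survives conditioning on arbitrary earlier outcomes, rather than an average over some particular distribution of reachable states. This hinges on the statelessness of MMR and on the independence of the leader elections across steps; once that is secured, the stochastic-domination bound $\mathbb{E}[X]\le 3$ and the linear translation to commit steps are routine.
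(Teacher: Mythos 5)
Your proof is correct and takes essentially the same approach as the paper: the paper likewise invokes \Cref{lem:commit-probability} to treat successive proposal steps as Bernoulli trials with success probability $1/3$ and one trial every two steps, concluding that the first success occurs in expectation at the third trial (step $s+4$) with the commit following three steps later, at step $s+7$. The only difference is one of rigor, in your favor: you explicitly justify that the $1/3$ bound survives conditioning on earlier failed trials (via MMR's statelessness and the freshness of leader-election randomness) and phrase the conclusion as stochastic domination by a geometric variable, whereas the paper simply asserts the process is Bernoulli.
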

\begin{proof}
    By~\Cref{lem:commit-probability}, successfully committing a block proposed by a correct node is a Bernoulli process with parameter 1/3 and with a trial every 2 steps.
    So, in expectation, the first success happens after 3 trials, {\em i.e.}, in step~$s+4$, and, by the algorithm, the corresponding chain is committed 3 steps later, in step~$s+7$.
\end{proof}

\section{PlusCal/TLA+ Specifications}%
\label{tla-specs}

\subsection{Sieve}

\includegraphics[page=1,width=.9\textwidth]{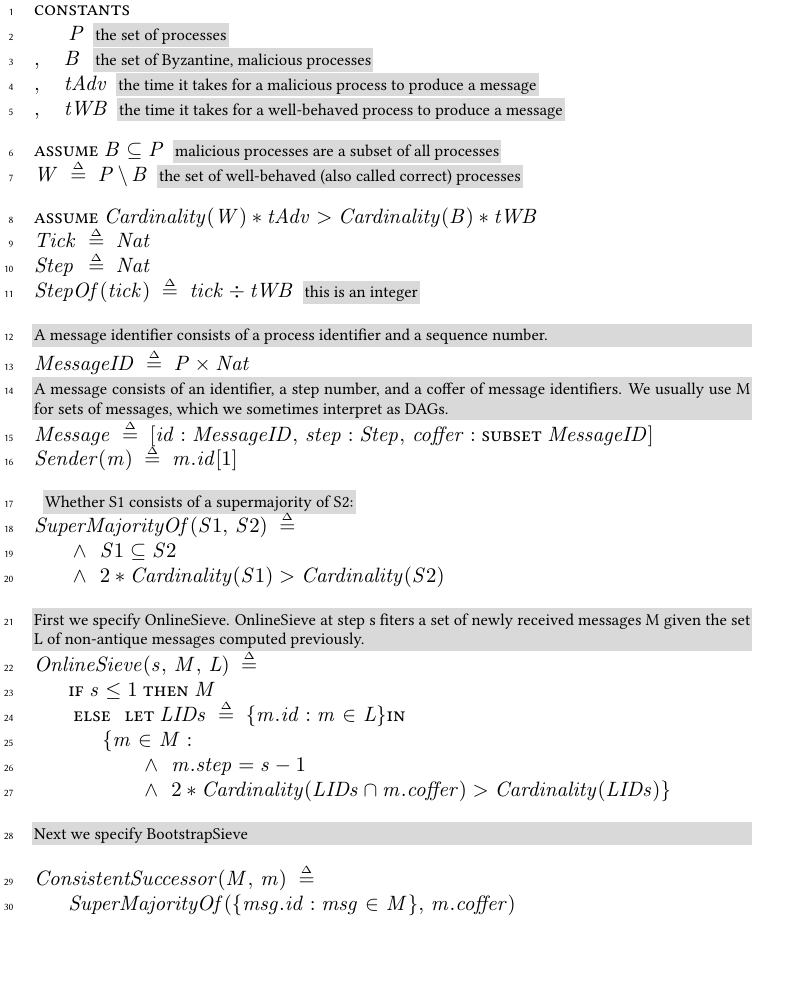}

\includegraphics[page=2,width=.9\textwidth]{Sieve.pdf}

\includegraphics[page=3,width=.9\textwidth]{Sieve.pdf}

\subsection{The MMR Algorithm}

\includegraphics[page=1,width=.9\textwidth]{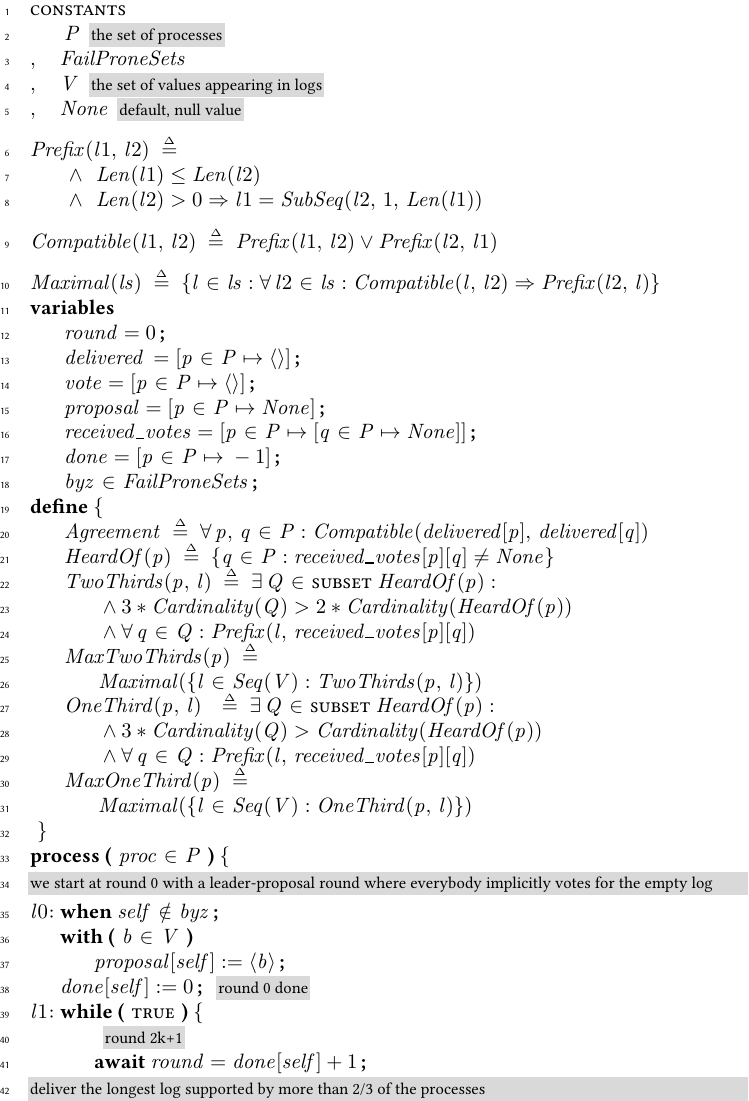}

\includegraphics[page=2,width=.9\textwidth]{MMR.pdf}

\end{document}